\definecolor{darkgrey}{rgb}{0.6,0.6,0.6}
\newtheorem{theorem}{Theorem}
\newtheorem{proposition}{Proposition}[section]
\newtheorem{corollary}[proposition]{Corollary}
\newtheorem{definition}[proposition]{Definition}
\theoremstyle{definition}
\newtheorem{remark}[proposition]{Remark}
\renewcommand{\@makecaption}[2]{%
\begin{quote}
\small {\bf #1}~#2
\end{quote}}
\@citea\NAT@hyper@{%
     \NAT@nmfmt{\NAT@nm}%
     \hyper@natlinkbreak{\NAT@aysep\NAT@spacechar}{\@citeb\@extra@b@citeb}%
     \NAT@date}}
\@citea\NAT@nmfmt{\NAT@nm}%
\NAT@spacechar\NAT@hyper@{\NAT@date}}{}{}
\@citea\NAT@hyper@{%
     \NAT@nmfmt{\NAT@nm}%
     \hyper@natlinkbreak{\NAT@spacechar\NAT@@open\if*#1*\else#1\NAT@spacechar\fi}%
       {\@citeb\@extra@b@citeb}%
     \NAT@date}}
\@citea\NAT@nmfmt{\NAT@nm}%
\fi\NAT@hyper@{\NAT@date}}
\begin{document}
\title{\Large The site frequency spectrum of dispensable genes}
\author{Franz Baumdicker\thanks{Abteilung f\"ur
    Mathematische Stochastik, Albert-Ludwigs University of Freiburg,
    Eckerstr.~1, D--79104 Freiburg, Germany, e-mail: baumdicker@stochastik.uni-freiburg.de}
}

\thispagestyle{empty}

\date{\today}

\maketitle

\begin{abstract}
The differences between DNA-sequences within a population are the basis to infer the ancestral relationship of the individuals.
Within the classical infinitely many sites model, it is possible to estimate the mutation rate based on the site frequency spectrum, 
which is comprised by the numbers $C_1,\dots,C_{n-1}$ where $n$ is the sample size and $C_s$ 
is the number of site mutations (Single Nucleotide Polymorphisms, SNPs) which are seen in $s$ genomes. 
Classical results can be used to compare the observed site frequency spectrum with its neutral expectation, $E[C_s]=\theta_2/s$, where $\theta_2$ is the scaled site mutation rate.
In this paper, we will relax the assumption of the infinitely many sites model that all individuals only carry homologous genetic material.
Especially, it is today well-known that bacterial genomes have the ability to gain and lose genes, such that every single genome is a mosaic of genes, and genes are present and absent
in a random fashion, giving rise to the dispensable genome. While this presence and absence has been modeled under neutral evolution within 
the infinitely many genes model in \cite{BaumdickerHessPfaffelhuber2010}, we link presence and absence of genes with the numbers of site mutations seen within each gene.
In this work we derive a formula for the expectation of the joint gene and site frequency spectrum, 
denotes $G_{k,s}$, the number of mutated sites occurring in exactly $s$ gene sequences, while the corresponding gene is present in exactly $k$ individuals. 
We show that standard estimators of $\theta_2$ for dispensable genes are biased and that the site frequency spectrum for dispensable genes differs from the classical result.

\end{abstract}


\textbf{Keywords:} population genetics, site frequency spectrum, dispensable gene, pangenome, Tajima's~D 

\textbf{AMS 2010 Subject Classification:} 92D20, 92D15, 60J25 (Primary); 60J70, 60C05 (Secondary)

\section{Introduction}


\subsubsection*{dispensable genes \dots}
In the last decade more and more complete genomes were sequenced \citep{Binnewies2006}. 
Comparison of whole genomes, especially in bacterial populations, revealed that not every
individual of a population carries the same set of genes.
There is a nonempty class of dispensable genes which are only present in a subset of the population.
The set of all genes present at least somewhere in the population, is thus larger than 
any single genome. In \cite{Medini2005} the term \emph{pangenome} was introduced to describe the set of genes distributed among a population.
In fact, it has been shown that dispensable genes constitute a large part of many bacterial pangenomes. \citep{Binnewies2006,Lapierre2009,Bentley2009,BaumdickerHessPfaffelhuber2011,Haegeman2012,Lobkovsky2013}.

The pangenome gives rise to the gene frequency spectrum, $G_1,\dots,G_n$.
If the genes of $n$ sampled individuals are known, $G_i$ is
the number of genes present in exactly $i$ of these $n$ individuals.
In \cite{Koonin2008} the pangenome is divided into three parts, based on the gene frequencies.
The genes present in all individuals constitute the core of the pangenome.
Genes in intermediate frequency are part of the shell, while the genes present in very few individuals belong to the cloud. 
It is reasonable to assume that most of the core genes have 
a function essential to survive and are under selection pressure \citep{Fang2005}. 
In contrast, the function of many dispensable genes is still unknown.


\subsubsection*{ \dots and the site frequency spectrum}
The function of a gene is encoded by its gene sequence, i.e. a string of DNA which is
ususally recorded as a string of the letters A, G, T and C.
Occasionally, during reproduction, a mutation changes a letter of the gene sequence.
Hence, the same gene can differ between individuals at any position (site) of the DNA strand.
The observed mutated sites (Single Nucleotide Polymorphisms, SNPs) of the gene sequences 
hold a lot of information.
Statistics based on SNP pattern can be used to estimate population parameters and infer the history of the population.
In particular, the site frequency spectrum, $C_1,\dots,C_n$, where $C_s$ is the number of SNPs observed in exactly $s$ individuals, is frequently examined.
To investigate the differences between gene sequences caused by site mutations.
the infinitely many sites model \citep{Kimura1969} is one of the standard models.
Within the infinitely many sites model the expected site frequency spectrum $\mathbb E[C_s] = \frac{\theta_2}{s}$ , 
i.e. the mean number of SNPs that occur in exactly $s$ individuals, has originally been computed by \cite{Watterson1975},
but nice derivations are also shown in \citep{Fu1995} and \citep[chap.~9.4]{Ewens2004}.
A bunch of estimators for $\theta_2$, the rate at which site mutations occur, is based on the site frequency spectrum and enables us to test for neutral evolution \citep{Watterson1975,Tajima1983,Tajima1989a,Achaz2009}.
The infinitely many sites model assumes that a gene is present in all individuals at any time to model site mutations in the DNA sequences of this gene.
In this paper we ask:
\begin{quote}
How does the site frequency spectrum change if this assumption is violated and a dispensable gene sequence, which is only present in a subset of the population, is analyzed? 
\end{quote}

We consider the frequencies of site mutations within dispensable gene sequences.
Each mutated site in a dispensable gene has two assigned frequencies.
\begin{itemize}
 \item[(i)] The frequency $s$ of the SNP itself, which is the number of individuals with the corresponding mutation.
 \item[(ii)] The frequency $k$ of the corresponding gene sequence, 
which is the number of individuals which possess the gene.
\end{itemize}
\begin{figure}
 \centering
 \begin{tabular}{ccccccc}
	& gene 1 		& gene 2 	& gene 3 		& gene 4 		& \dots 	& gene $m$ \\
refseq	& \texttt{ATGTCT}    & \texttt{GCTATG} 	&\texttt{CCGTTGGAG}	&\texttt{TCGGAGCAG}	&		& \texttt{TGA} \\
ind. 1	& \texttt{--A-T-}    & $\varnothing$ 	&\texttt{-------T-}	&$\varnothing$		&		& $\varnothing$ \\
ind. 2	& \texttt{--A---}    & \texttt{--CT---} &\texttt{T--A-----}	& $\varnothing$		&		& \texttt{--T} \\
ind. 3	& $\varnothing$	     & $\varnothing$ 	&\texttt{T-----C--}	&$\varnothing$		&		& \texttt{C--} \\
ind. 4	& $\varnothing$      & $\varnothing$ 	&\texttt{T--A-----}	& \texttt{--A---A--}	&		& \texttt{C--} \\
\vdots	& \vdots		& \vdots 	&\vdots			&\vdots			&		&\vdots \\
ind. $n$& \texttt{-----C}    & $\varnothing$ 	&\texttt{T--A-----}	& $\varnothing$		&		& $\varnothing$ \\
 \end{tabular}
\caption{Data structure for joint distribution of segregating sites and dispensable genes. The aligned sequences of several dispensable genes are shown.
 The first row shows the reference sequence of the gene, i.e. the ancestral gene sequence at the time the gene first appeared in the population.
 The remaining rows show the site mutations of the current gene sequences in $n$ sampled individuals.
 If the whole gene is absent in an individual the symbol $\varnothing$ is used. \label{datatable}}
\end{figure}
Figure \ref{datatable} shows a typical data structure for site mutations within dispensable genes.

In Section \ref{modelgsf} we will introduce a population genetics framework for the joint evolution of the frequency of dispensable genes, 
driven by events of gene gain and gene loss, and the mutations within the corresponding sequences.
In Section \ref{gsfresults} the main goal is to calculate the joint frequency spectrum within this framework. 
We denote the joint site and gene frequency spectrum, i.e. the number of mutated sites occurring in 
exactly $s$ individuals, where the corresponding gene is present in exactly $k$ individuals,
by $(G_{k,s})_{\genfrac{}{}{0pt}{}{1 \leq k \leq n\phantom{-1}} {1 \leq s \leq k- 1}}.$
The first moment of $G_{k,s}$ is given in Theorem \ref{jointfs}.
In particular, we are interested in the marginal mean site frequency spectrum for genes present in exactly $k$ individuals.
We will show in Corollary \ref{condsitefreqspec} that, the site frequency spectrum for sequences present in $k$ out of $n$ individuals
 differs from the site frequency spectrum based on a sample of $k$ individuals, where each individual possesses the gene sequence.

In Section \ref{effectonestimates} we further investigate the difference between site frequencies of dispensable and non-dispensable genes.
Using the classical infinitely many sites model to estimate site mutation rates, although at least some of the gene sequence considered are 
only present in a subset of the population, will lead to an underestimation.
To be more precise let us consider a single gene sequence found in $k$ out of $n$ individuals, where we are interested in the site mutation rate $\theta_2$.
We will show in Theorem \ref{dispestimates} that two frequently used estimators for the site mutation rate $\theta_2$, 
namely Watterson's Estimator $\widehat \theta_W$ \citep{Watterson1975} as well as Tajimas estimator $\widehat \pi$ \citep{Tajima1983}, will both 
have a negative bias which is at least as high as  $\frac{n-k}{n}\theta_2$,
such that $\mathbb E[\widehat \theta_W] \leq \frac{k}{n} \theta_2$ and $\mathbb E[\widehat \pi] \leq \frac{k}{n} \theta_2$.
We shortly discuss the impact of the shown results in Section~\ref{discussion} and illustrate the effect on estimates of Tajima's D \citep{Tajima1989a}.
Finally most of the proofs are given in Section~\ref{proofs}.


%
%

\section{A model for gene and site frequency evolution}
\label{modelgsf}

\subsection{Wright--Fisher model}

Consider a population with a constant population of $N$ individuals evolving in a Wright--Fisher model.
That is, each of the $N$ individuals of the current generation chooses his parent at random 
among the individuals of the previous generation.

We model a pool of potential genes, which can be gained and lost again, by the set $I = [0,1]$, where each point corresponds to a potential gene.
In addition for each gene sequence $u \in I$ its DNA-sequence is modeled by the set $J = (0,1]$.
Each point $v \in J$ corresponds to a site, which can be hit by a mutation.
Note that we have excluded the point $0$ from $J$ to use this point as a marker for the presence and absence of genes.

Each individual carries a set of genes, which constitute the genome of this individual.
In addition each of the genes carries its own set of site mutations.
We will describe the genome and the gene sequences of an individual by a simple finite counting measure $m$ on $[0,1]^2$, such that $m(u,0)=1$ if gene $u$ is present in this individual and
$m(u,I)=0$ otherwise. In the same manner the gene $u$ in this individual carries a mutation at site $v$  iff $m(u,v) = m(u,0) = 1$.

The combined mutational mechanisms for genes and the sites within these genes now work as follows:\\
After an individual chooses its parent, but before it inherits the state of the parent the following 
mutational mechanisms change the genomic and the genetic state $m$ of the parent to $m'$ of the child.
\begin{description}
\item[genomic]
\item[gene gain] before reproduction of an individual with probability $\mu_1$ a new gene $u \in I$ is added to the genome $m$.
      The gene is assumed to be completely new to the population, i.e.\ $u$ is chosen uniform from $I$.
      If gene $u$ is gained, the state changes to $m' = m + \delta_{(u,0)}$.
\item[gene loss] each gene present in the parent is lost independently with probability $\nu$ during reproduction and no longer present in the child,
      i.e.\ with probability $1-\nu$ a gene present in the parent will still be present in the child.
      If gene $u$ is lost, the state changes to $m' = m - m|_{\{u\}\times I}$.
\item[genetic]
\item[site mutations] each gene $u$ present in the child, i.e. $m'(u,0) = 1$, suffers an additional site mutation with probability $\mu_2$.
      The mutation hits a uniform site $v \in J$, which has never been hit before, and changes the genetic state to $m' = m + \delta_{(u,v)}$.
\end{description}

\begin{remark}
 It is well known that the mutation rate of gene sequences may vary between different regions of the genome .
 While a straightforward adaptation for gene specific mutations rates should be possible,
 we will keep the model as simple as possible and assume that for any gene sequence the site mutation rate $\mu_2$ equals.
\end{remark}

\begin{remark}
 In this model we combined the mutational mechanisms of previous publications to model genetic and genomic variation at once.
 New genes may be gained and each present gene can get lost, just as in the infinitely many \emph{genes} model \citep{BaumdickerHessPfaffelhuber2010}.
And the sites within each of the genes can get hit by mutations, just as in the infinitely many \emph{sites} model \citep{Kimura1969}.
\end{remark}

\subsection{Kingman's coalescent}

Kingman's coalescent, given in Definition \ref{Kingmancoal}, was introduced by \cite{Kingman1982}. 
This process appears as the large population limit for a large class of reproduction models, including the Wright-Fisher \citep{Wright1938} and the Moran model \citep{Moran1958}.
The coalescent defines the genealogy of a sample and is meanwhile a common tool in population genetics.
We will use a time scaling to obtain Kingman's coalescent in the large population limit,
$N \to \infty$.
Thus, we assume that $\mu_i = \mu_i(N)$ for $i=1,2$ and $\nu = \nu(N)$, such that $\theta_i = \lim_{n\to\infty} 2\mu_i(N)N$ and $\rho = \lim_{n\to\infty} 2\nu_i(N)N$.

\begin{definition}[Kingman's coalescent]
 \label{Kingmancoal}
  The Kingman coalescent (or the $n$-coalescent) $(R_t)_{t \geq 0}$ is a continuous time Markov process with state space $\Pi_n$, 
the set of all partitions of $\{1,\dots,n\}$, and infinitesimal generator $Q = (q_{\xi\eta})_{\xi,\eta \in \Pi_n}$ given by:\\
\begin{equation}
  q_{\xi\eta} = \begin{cases}
		  - \frac{k(k-1)}{2} & \text{if } \xi = \eta\\
		  1 & \text{if } \xi \prec \eta\\ 
		  0 & otherwise
                \end{cases}
\end{equation}
where $k := |\xi|$ is the number of partition elements in $\xi$, and $\xi \prec \eta$ iff $\eta$ is obtained from 
$\xi$ by combining two partition elements of $\xi$.
The initial state $R_0 = \{ \{1\},\dots,\{n\} \}$ is the partition, where each $i \in \{1,..,n\}$ is its own partition element. 
\end{definition}

\begin{remark}
The partition $R_t = \{n_1,\dots,n_K\}$ contains the partition element $n_k$ with $i,j \in n_k \in R_t$ if and only if the $i$-th and the $j$-th individual of the sample have a common ancestor at time t.
Note that in Kingman's coalscent time is measured backwards. Furthermore the $n-$coalescent implicitly defines a random bifurcating tree with $n$ leaves. 
At time $t$ the tree has $k = |R_t|$ branches, where the $l-$th branch leads to the $i-$th leaf if $i \in n_l \in R_t$, see Figure \ref{gsfsmodel}.
\end{remark}

\begin{definition}[Kingman's coalescent]
  \label{Kingmancoaltree}
  We denote the random tree resulting from the above mechanism -- the
  Kingman coalescent -- by $\mathcal T$. We consider $\mathcal T$ as a
  partially ordered metric space with order relation $\preceq$ and
  metric $d_{\mathcal T}$ where the distance of two points in
  $\mathcal T$ is given by the sum of the times to their most recent
  common ancestor. We make the convention that $s\preceq t$ for
  $s,t\in\mathcal T$ if $s$ is an ancestor of $t$.
\end{definition}

In \citep{BaumdickerHessPfaffelhuber2010} we introduced the process $\mathcal G_t$, which describes the set of genes along the coalescent.
 In the same spirit we will define the process $\mathcal M_t$ with state space given by the set of simple counting measures on $[0,1]^2$.
 $\mathcal M_t$ accumulates mutations in $I\times J$ by adding the point $(u,dv)$ to $\mathcal M_t$ at rate $\mathcal G_t(u)dt \theta_2 dv$.
 The point $(u,v)$ corresponds to a mutation in gene $u$ at site $v$.

\begin{figure}
\begin{tikzpicture}[scale = 0.85]
\draw (2.125,5) -- (2.125,6);
\draw[dotted] (2.125,6) -- (2.125,6.5);
 \draw (0,3.8)--(0,0);
 \draw (0,3.8)--(1.5,3.8);
 \draw (0.75,3.8)--(0.75,5) -- (3.5,5) -- (3.5,1);
 \draw (3,0) -- (3,1)--(4,1) -- (4,0);
 \draw (1,0) -- (1,3.1)--(2,3.1) -- (2,0);
 \draw (1.5,3.1)--(1.5,3.8);
 \draw[font = \Large, color = purple] (2.125,5.31) node (gain1) {$\blacktriangledown$};
 \draw[color = purple,right] (gain1) node {\raisebox{-0.5cm}{$2$}};
 \draw[font = \Large, color = blue] (0.75,4.6) node (gain2) {$\blacktriangledown$};
 \draw[color = blue,right] (gain2) node {\raisebox{-0.5cm}{$1$}};
 \draw[font = \Large, color = green] (3.5,4.1) node (gain3) {$\blacktriangledown$};
 \draw[color = green,right] (gain3) node {\raisebox{-0.5cm}{$4$}};
 \draw[font = \Large, color = black] (0,0.7) node (gain4) {$\blacktriangledown$};
 \draw[color = black,right] (gain4) node {\raisebox{-0.5cm}{$3$}};
 \draw[font = \Large, color = purple] (1,1.65) node (loss1) {$\bullet$};
 \draw[color = purple,right] (loss1) node {\raisebox{-0.5cm}{$2$}};
 \draw[font = \Large, color = green] (4,0.5) node (loss3) {$\bullet$};
  \draw[color = green,right] (loss3) node {\raisebox{-0.5cm}{$4$}};
 \draw[color = purple] (0.75,4.1) node (mut11) {\XSolid};
 \draw[color = purple,right] (mut11) node {\raisebox{-0.5cm}{$\,2$}};
 \draw[color = purple] (2,0.21) node (mut12) {\XSolid};
 \draw[color = purple,right] (mut12) node {\raisebox{-0.5cm}{$\,2$}};
 \draw[color = purple] (3.5,2.3) node (mut13) {\XSolid};
 \draw[color = purple,right] (mut13) node {\raisebox{-0.5cm}{$\,2$}};
 \draw[color = blue] (0,1) node (mut21) {\XSolid};
 \draw[color = blue,right] (mut21) node {\raisebox{-0.5cm}{$\,1$}};
 \draw[color = green] (4,0.9) node (mut31) {\XSolid};
 \draw[color = green,right] (mut31) node {\raisebox{-0.5cm}{$\,4$}};
 \draw[color = green] (3.5,3.3) node (mut32) {\XSolid};
 \draw[color = green,right] (mut32) node {\raisebox{-0.5cm}{$\,4$}};
 \draw[color = green] (3.5,1.95) node (mut33) {\XSolid};
 \draw[color = green,right] (mut33) node {\raisebox{-0.5cm}{$\,4$}};
\draw[right] node at (4.5,0.5){$\{\{1\},\{2\},\{3\},\{4\}\}$};
\draw[right] node at (4.5,2.05){$\{\{1\},\{2\},\{3\},\{4,5\}\}$};
\draw[right] node at (4.5,3.45){$\{\{1\},\{2,3\},\{4,5\}\}$};
\draw[right] node at (4.5,4.4){$\{\{1,2,3\},\{4,5\}\}$};
\draw[right] node at (4.5,6){$\{\{1,2,3,4,5\}\}$};
\draw (0,-0.2) node {1};\draw (1,-0.2) node {2};\draw (2,-0.2) node {3};\draw (3,-0.2) node {4};\draw (4,-0.2) node {5};
 \end{tikzpicture}
\begin{tabular}[b]{lcccc}
         &  {\color{blue}gene 1} &  {\color{purple}gene 2} &  {\color{black}gene 3} &  {\color{green}gene 4} \\
 refseq & {\color{blue} {\tt NNNNN }} & {\color{purple} {\tt NNNNN }} & {\color{black} {\tt NNNN }} & {\color{green} {\tt NNNNNNN }}  \\        
 ind. 1 & {\color{blue} {\tt --T-- }} & {\color{purple} {\tt -A--- }} & {\color{black} {\tt ---- }} & {\color{green} {\tt $\varnothing$ }}  \\
 ind. 2 & {\color{blue} {\tt ----- }} & {\color{purple} {\tt $\varnothing$ }} & {\color{black} {\tt $\varnothing$ }} & {\color{green} {\tt $\varnothing$ }}  \\
 ind. 3 & {\color{blue} {\tt ----- }} & {\color{purple} {\tt -AA-- }} & {\color{black} {\tt $\varnothing$ }} & {\color{green} {\tt $\varnothing$ }}  \\
 ind. 4 & {\color{blue} {\tt $\varnothing$ }} & {\color{purple} {\tt T---- }} & {\color{black} {\tt $\varnothing$ }} & {\color{green} {\tt ---AC-- }}  \\
 ind. 5 & {\color{blue} {\tt $\varnothing$ }} & {\color{purple} {\tt T---- }} & {\color{black} {\tt $\varnothing$ }} & {\color{green} {\tt $\varnothing$ }}  \\
 \end{tabular}
 \caption{\label{gsfsmodel}We model mutations within dispensable genes along a Kingman coalescent.
 Along each branch new genes can occur ($\blacktriangledown$), 
 and existing genes can get hit by single point mutations ({\tiny \XSolid }) or get lost by a gene loss event ($\bullet$).
 The left graph shows one realization of the process defined in Definition \ref{def:treeindexedMC}.
 The index at each symbol indicates the affected gene.
 In the middle the corresponding states of Kingman's coalscent are given. The table shows the resulting gene sequences.}
\end{figure}
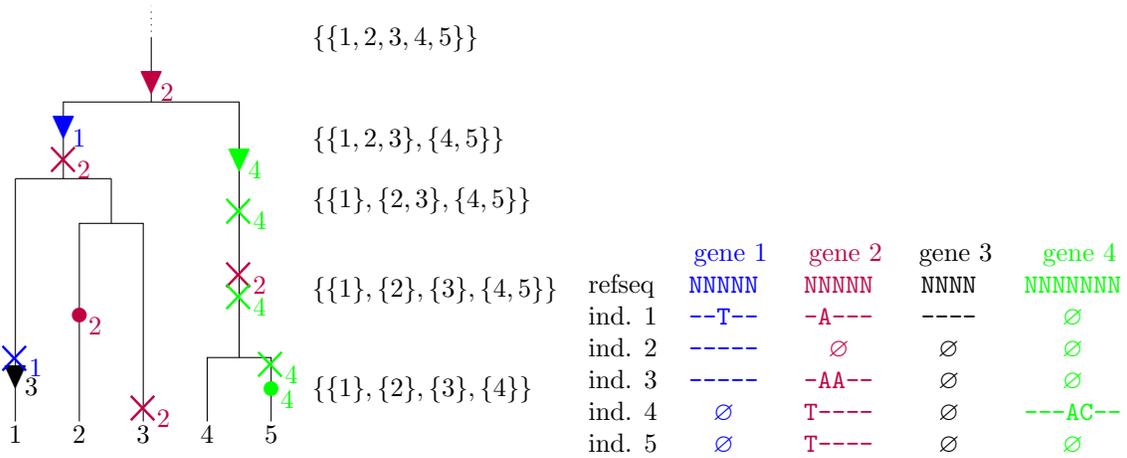

\begin{definition}[Tree-indexed Markov chain for gene gain, loss and site mutation]
\label{def:treeindexedMC}
  Let $I:=[0,1], J = (0,1]$ and let $\mathcal T$ be a Kingman coalescent, with an infinite lineage added at the root of the tree. Given
  $\mathcal T$, we define a Markov chain $\Gamma_{\mathcal T} =  (\mathcal M_t)_{t\in\mathcal T}$, indexed by $\mathcal T$, with
  state space $\mathcal N_f([0,1]^2)$, the space of finite counting measures on
  $[0,1]^2 = I \times (\{0\} \cup J)$.
Denoting by $\lambda_I$ the  Lebesgue measure on $I$, $\Gamma_{\mathcal T}$ makes transitions forwards in time, i.e. from the root to the leaves,
  \begin{equation}
    \begin{aligned}
      &\text{from }m \text{ to } m + \delta_{(u,0)} \quad\ \text{ at rate }
      \tfrac{\theta_1}{2}
      \lambda_I(du),\\
      &\text{from }m \text{ to } m - m|_{\{u\}\times I}  \text{ at rate }
      \tfrac\rho 2 m(u,0),\text{ and}\\
      &\text{from }m \text{ to } m + \delta_{(u,v)} \quad\ \text{ at rate }
      \tfrac{\theta_2}{2} m(u,0)\lambda_I(dv)
    \end{aligned}
  \end{equation}
  along $\mathcal T$. Taking into account that the tree $\mathcal T$
  has $n$ leaves, one for each individual of the sample, we denote
  these leaves by $1,\dots,n\in\mathcal T$. In this setting, $\mathcal
  M_1,\dots,\mathcal M_n$ describe the genes and the mutations within these genes present in individuals
  $1,\dots,n$.
\end{definition}

A graphical illustration of the Markov chain along the coalescent is given in Figure \ref{gsfsmodel}.
Note that all gained points in $[0,1]^2$ are almost surely different, so $\mathcal M_i$ is a.\,s.\ a simple counting measure.
Thus we identify counting measures with their support in our notation, i.e.\ if $g \in \mathcal N_f(I \times J)$ has no double points, there exist pairwise different
$(u_1,v_1)\dots(u_m,v_m) $ with $g = \sum_{i=1}^{m} \delta_{(u_i,v_i)}$. In this case we will also write $g = \{(u_1,v_1)\dots(u_m,v_m)\}$.\\
Since $v=0$ is almost surely never hit by a mutation for any gene, we may use $[0,1] \times \{ 0\}$
to model the presence and absence of genes. For simplicity we will assume that mutations are uniformly chosen from $J = (0,1]$.\\
Setting $\mathcal G_i(u) := \mathcal M_i(u,0)$, we will regain the corresponding tree indexed Markov chain of the infinitely many genes model, as given in Def 2.2 in \citep{BaumdickerHessPfaffelhuber2010}.

\section{Results}
\label{gsfresults}

\subsection{Joint gene and site frequency spectrum}
\label{S:35}

In contrast to the approach taken here, the expected frequency of mutated sites has so far
only been investigated for genes that can never get lost or gained.
Such genes we will call \emph{essential core genes}. 
These essential core genes are assumed to be present in any individual at any time. In our setting of Definition \ref{def:treeindexedMC}
an essential core gene $u$ has been gained at $-\infty$ and accumulates site mutations along the tree $\mathcal T$ at rate $\tfrac{\theta_2}{2}$, while it can not get lost.
For these essential core genes the classical results for the site frequency spectrum hold. 
Moments of the site frequency spectrum were computed e.g. in \citep{Fu1995}.

First we recall the results for the classical site frequency spectrum.
Throughout this Section we will fix the size of the sample to $n$ individuals.

\begin{definition}[Site frequency spectrum for an essential core gene]
\label{sfsforessential}
 Consider the gene sequence of a essential core gene $u$,
i.e. $u$ is a single gene, which cannot get lost.
In the setting of Definition~\ref{def:treeindexedMC} this corresponds to
the Markov chain  $(\mathcal M_t)_{t \in \mathcal T}$,
if $r$ is the root of $\mathcal T$,
given that $\mathcal M_r = \delta_{(u,0)}$ with $\rho = 0$.
 Now site mutations within $u$ occur at rate $\tfrac {\theta_2}{2}$ along the lineages of $\mathcal T$.
 For $J = (0,1]$, let $\mathcal C^{u}_t := \mathcal M_t|_{u \times J} \in \mathcal N_f(J) $ be the finite counting 
measure describing the site mutations of the essential core gene $u$ along the coalescent $\mathcal T$.
 So $\mathcal C^{u}_i$ is the set of mutated sites within gene $u$ present in individual $i$.
 For a sample of size $n$ let $C^{u}_s$ be the number of sites where $s$ individuals in the sample carry the same mutation.
 I.e.\ 
 $$ C^u_s = \big |  \big\{ v \in J: v \in \mathcal C^u_i \text{ for exactly } s \text{ different } i \in \{1,\dots,n\}           \big\}   \big |.$$
  Then $C^{u}_s$ for $ 1 \leq s < n$ is called the site frequency spectrum of the essential core gene $u$.
\end{definition}

\begin{theorem}[Classical site frequency spectrum]
 \label{sitefreqspec}
Consider the setting of Definition \ref{sfsforessential}.
Suppose the site mutation rate is given by $\tfrac{\theta_2}{2}$.
The expected site frequency spectrum for an essential core gene $u$ is given by
\begin{equation}
\label{sfsformula}
 \mathbb E [C^u_s] = \frac{\theta_2}{s}, \qquad s=1,\dots,n-1.
\end{equation}
\end{theorem}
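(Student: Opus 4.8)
The plan is to reduce the statement to the well-known combinatorial description of the coalescent tree and then exploit the uniform placement of mutations along its branches. First I would recall that, conditionally on the tree $\mathcal{T}$, the site mutations of the essential core gene $u$ form a Poisson point process on the branches of $\mathcal{T}$ (crossed with $J$) of intensity $\tfrac{\theta_2}{2}\,dt\,\lambda_J(dv)$, by the third transition rule in Definition~\ref{def:treeindexedMC} with $\mathcal{M}_r=\delta_{(u,0)}$ and $\rho=0$ (so $u$ is present along every lineage, $\mathcal{G}_t(u)\equiv 1$). Since mutations are never revisited, each mutation on a given branch $b$ is carried by exactly those leaves that descend from $b$; hence a mutation contributes to $C^u_s$ precisely when it lands on a branch subtending $s$ of the $n$ leaves.

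Next I would classify the branches of $\mathcal{T}$ by the number of leaves they subtend. Let $B_s$ denote the total length (sum over branches subtending exactly $s$ leaves) of such branches. Then, by the Poisson/Campbell formula applied conditionally on $\mathcal{T}$,
\begin{equation}
  \mathbb{E}\big[C^u_s \,\big|\, \mathcal{T}\big] \;=\; \tfrac{\theta_2}{2}\, B_s,
\end{equation}
and taking expectations gives $\mathbb{E}[C^u_s] = \tfrac{\theta_2}{2}\,\mathbb{E}[B_s]$. So the whole problem collapses to computing $\mathbb{E}[B_s]$, the expected total length of branches subtending $s$ leaves in Kingman's $n$-coalescent.

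For $\mathbb{E}[B_s]$ I would use the standard fact, going back to the references cited in the paper (\citealp{Fu1995}; \citealp[chap.~9.4]{Ewens2004}), that $\mathbb{E}[B_s] = \tfrac{2}{s}$ for $s=1,\dots,n-1$. Combining with the previous display yields $\mathbb{E}[C^u_s] = \tfrac{\theta_2}{2}\cdot\tfrac{2}{s} = \tfrac{\theta_2}{s}$, which is \eqref{sfsformula}. To keep the argument self-contained one can re-derive $\mathbb{E}[B_s]=2/s$ in one of the usual ways: either (i) track, while the coalescent has $k$ blocks, the probability that a given block has size $s$ (which follows the classical urn computation giving, after summing over $k$, the harmonic telescoping to $2/s$), or (ii) use the sampling-consistency/Pólya-urn representation of the coalescent, under which a uniformly chosen branch subtends a size that is uniform on $\{1,\dots,n-1\}$ in the appropriate sense. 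Either route is routine.

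The only genuine obstacle is the justification of the conditional Poisson structure of the site mutations and the resulting identity $\mathbb{E}[C^u_s\mid\mathcal{T}] = \tfrac{\theta_2}{2}B_s$, i.e.\ making precise that the infinitely-many-sites bookkeeping in Definition~\ref{def:treeindexedMC} really does assign each mutation to the clade below its branch and never collides; this is where the model's ``never hit before'' assumption for sites does the work, and once that is spelled out the rest is the classical computation. I would therefore spend the bulk of the proof on this point and cite \citep{Fu1995,Ewens2004} for $\mathbb{E}[B_s]=2/s$.
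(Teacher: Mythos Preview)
Your proposal is correct and in fact goes beyond what the paper does: the paper treats Theorem~\ref{sitefreqspec} as a classical result and gives no proof at all, simply citing \cite{Watterson1975}, \cite{Fu1995}, and \citet[chap.~9.4]{Ewens2004}. Your sketch---condition on $\mathcal T$, use the Poisson structure of mutations to get $\mathbb E[C^u_s\mid\mathcal T]=\tfrac{\theta_2}{2}B_s$, then invoke $\mathbb E[B_s]=2/s$ from those same references---is exactly the standard argument those sources contain, so your write-up is entirely compatible with (and more explicit than) the paper's own treatment.
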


A similar formula can be computed for the frequencies of dispensable gene sequences themselves.
Therefore ignore site mutations for once and consider only the presence and absence of the genes.
To model the gene frequencies the infinitely many genes model was introduced
in \cite{BaumdickerHessPfaffelhuber2010}.
As a matter of fact, the infinitely many genes model was developed in spirit of the infinitely many sites model.
The main difference between the infinitely many sites model and the infinitely many genes model is that genes are allowed to get lost again, 
while each mutation, once it arose, will be present in all offspring. Thus the gene frequency spectrum differs from the site frequency spectrum.
The following result for the gene frequency spectrum holds.

\begin{definition}
If we set $\mathcal G_i(u) := \mathcal M_i(u,0)$, the \emph{gene frequency spectrum (of the dispensable genome)} is
given by $G_1,\dots,G_n$, where
\begin{align}\label{eq:Gk}
  G_k := |\{u\in I: u\in\mathcal G_i \text{ for exactly } k \text{
    different }i\}|.
\end{align}
\end{definition}

\begin{theorem}[gene frequency spectrum]\label{genefreqspec}
  For $G_1,\dots,G_n$ as above,
  \begin{equation*}
    \begin{aligned}
      \mathbb E[G_k] & = \frac{\theta}{k}\frac{(n-k+1)\cdots
        n}{(n-k+\rho)\cdots (n-1+\rho)} \qquad \qquad k=1,\dots,n.
    \end{aligned}
  \end{equation*}
\end{theorem}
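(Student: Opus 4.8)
The plan is to reduce the statement to a sampling identity for the \emph{population-level} gene frequency spectrum and then evaluate a Beta integral. Ignoring site mutations (i.e. working only with $\mathcal G_i(u)=\mathcal M_i(u,0)$), pass to the diffusion limit $N\to\infty$ of the Wright--Fisher model with gene gain and loss only. In this limit the frequencies $x\in(0,1)$ of the genes present in the whole population should form a Poisson point process whose intensity $\psi(x)\,dx$ is pinned down by stationarity of the associated one--dimensional diffusion: a gene at frequency $x$ undergoes Wright--Fisher resampling (generator term $\tfrac12 x(1-x)\partial_x^2$) together with the deterministic downward push produced by gene loss, each lineage being lost at rate $\tfrac\rho2$ (drift term $-\tfrac\rho2 x\,\partial_x$), while new genes enter the boundary $0^+$ at total rate $\sim\tfrac{N\theta}{2}$ at frequency $1/N$ (here $\theta=\theta_1$, the scaled gain rate). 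Solving the corresponding stationary transport equation --- Wright's formula for one--way flux at rate $\tfrac\rho2$ with no return --- gives
$$\psi(x)=\theta\,x^{-1}(1-x)^{\rho-1},\qquad 0<x<1,$$
the infinitely--many--genes analogue of the classical $\theta/x$; note $\psi$ fails to be integrable near $1$ precisely when $\rho=0$, which matches the fact that essential--type genes would then sit in everybody.

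Given $\psi$, a sample of size $n$ is obtained by binomial thinning: conditionally on a gene having population frequency $x$, the number of the $n$ sampled individuals carrying it is $\mathrm{Bin}(n,x)$, and distinct genes contribute independently because the frequencies form a Poisson process. Hence
\begin{equation*}
\mathbb E[G_k]=\binom nk\int_0^1 x^k(1-x)^{n-k}\,\psi(x)\,dx=\theta\binom nk\int_0^1 x^{k-1}(1-x)^{\,n-k+\rho-1}\,dx .
\end{equation*}
The remaining integral is $B(k,n-k+\rho)=\Gamma(k)\Gamma(n-k+\rho)/\Gamma(n+\rho)$; inserting $\binom nk=n!/(k!(n-k)!)$ and using $\Gamma(k)/k!=1/k$, $n!/(n-k)!=(n-k+1)\cdots n$ and $\Gamma(n+\rho)/\Gamma(n-k+\rho)=(n-k+\rho)\cdots(n-1+\rho)$ gives exactly the claimed expression. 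As a check, $k=n$ yields $\mathbb E[G_n]=\tfrac\theta n\,n!/\bigl(\rho(\rho+1)\cdots(\rho+n-1)\bigr)$, finite for $\rho>0$ and diverging as $\rho\to0$, as it should.

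The genuinely substantial step is the first one: making rigorous that in the large--population limit the gene frequencies of the infinitely--many--genes model converge to a Poisson point process with intensity $\theta x^{-1}(1-x)^{\rho-1}$ and that sampling is independent binomial thinning. This is the content of \cite{BaumdickerHessPfaffelhuber2010}, where it is obtained directly from Kingman's coalescent rather than via the diffusion: one writes $\mathbb E[G_k]=\tfrac\theta2\,\mathbb E[\Lambda(\{x\in\mathcal T:|F(x)|=k\})]$ with $\Lambda$ the length measure on the coalescent tree (including the infinite root branch) and $F(x)\subseteq\{1,\dots,n\}$ the set of leaves still carrying a gene gained at $x$ after the rate--$\tfrac\rho2$ loss process has been run down the tree, conditions on the tree and on the block above which the gain point lies, and sums the probabilities of reaching exactly $k$ leaves level by level using the classical block--size distribution of the $n$--coalescent; the sum collapses to the same Beta integral.

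Thus for the present paper it is in fact legitimate simply to invoke Theorem \ref{genefreqspec} from \cite{BaumdickerHessPfaffelhuber2010}; I would include the short derivation above mainly to keep the exposition self--contained and to expose the product structure, since the same $\mathrm{Beta}(k,n-k+\rho)$ bookkeeping, now decorated with the site mutations of Theorem \ref{sitefreqspec}, is what will reappear in Theorem \ref{jointfs}. I expect the only real obstacle to be the justification of the limit / Poisson--sampling structure (equivalently, the coalescent level--by--level count), everything downstream being the routine Beta evaluation sketched above.
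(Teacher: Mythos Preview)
Your assessment is exactly right: the paper's own ``proof'' of Theorem~\ref{genefreqspec} consists of the single line ``see \cite{BaumdickerHessPfaffelhuber2010}'', so simply invoking that reference is what the author does, and you correctly flag this.

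The independent derivation you supply is correct and takes a genuinely different route from the one in \cite{BaumdickerHessPfaffelhuber2010}. That paper (and the present one, when it later proves Theorem~\ref{jointfs}) works entirely on the coalescent side via Hoppe's urn: one sums over the level $i$ at which the gain mark falls, using
\[
\mathbb P[\text{mark on a line during }T_i]=\frac{\theta_1}{i(i-1+\rho)}\,du,\qquad
\mathbb P[\text{that line has size }k]=\binom{n-i}{k-1}\frac{(k-1)!\,(i-1+\rho)\cdots(n-k-1+\rho)}{(i+\rho)\cdots(n-1+\rho)},
\]
and the sum over $i$ telescopes directly to the stated product; no Beta integral appears. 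Your approach instead passes to the forward diffusion picture, identifies the stationary intensity $\psi(x)=\theta\,x^{-1}(1-x)^{\rho-1}$ of the Poisson point process of gene frequencies, and obtains $\mathbb E[G_k]$ by binomial thinning and a Beta evaluation. Both are valid; the urn route is more self-contained within the coalescent framework already set up in Section~\ref{proofs} and generalises cleanly to the joint spectrum in Theorem~\ref{jointfs}, while your diffusion/Poisson argument is shorter and makes the link to the classical $\theta/x$ spectrum transparent. Your only slight overstatement is that the coalescent computation in \cite{BaumdickerHessPfaffelhuber2010} ``collapses to the same Beta integral'' --- it does not pass through a Beta integral at all, but the two computations are of course equivalent after the fact.
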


\begin{proof}
see \cite{BaumdickerHessPfaffelhuber2010}
\end{proof}

Theorem~\ref{genefreqspec} describes the frequency of dispensable
genes, which can get lost.
Theorem~\ref{sitefreqspec} describes the site frequency spectrum only within genes that can never be lost.
In contrast to essential core genes, dispensable genes can be present at any frequency such that
equation~\eqref{sfsformula} no longer holds for a dispensable gene sequence.
To address this issue we will compute the expected joint gene and site frequency spectrum. In Corollary~\ref{condsitefreqspec} this enables us to obtain an analogous result to Theorem~\ref{sitefreqspec} for dispensable genes.

\begin{definition}
\label{def:jointfs}
The \emph{joint gene and site frequency spectrum (for the dispensable genome)} is
given by $G_{1,1},\dots,G_{1,n},G_{2,1},\dots,G_{2,n},\dots,G_{n,n}$, where
\begin{align*}\label{eq:Gk}
   G_{k,s} := \big|\big\{ (u,v) \in I \times I : u \in \mathcal G_i \text{ for exactly } k \text{ different }i \text{ ,namely } i_1,\dots,i_k \text{, and }  \\
   (u,v) \in \mathcal M_{i_j} \text{ for exactly } s \text{ different } i_j \text{ with } j \in \{1,\dots,k\}   \big\}   \big |
\end{align*}
\end{definition}

\begin{theorem}[Joint gene and site frequency spectrum]
\label{jointfs}
Suppose the gene gain rate is given by $\tfrac{\theta_1}{2}$ and the gene loss rate is given by $\tfrac{\rho}{2}$
Suppose further the mutation rate is given by $\tfrac{\theta_2}{2}$ and the sample size is $n$. 
  For $G_{k,s}$ as above and $s < k$,
  \begin{equation*}
    \begin{aligned}
      \mathbb E[G_{k,s}] & = 
\frac{\theta_1}{k} \frac{(n-k+1)\cdots n}{(n-k+\rho)\cdots(n-1+\rho)} \frac{\theta_2}{s} \frac{k}{n} \binom{n-1}{s}^{-1} \sum_{j=0}^{n-s-1} \frac{j+1}{j+1+\rho} \binom{n-j-2}{s-1}\\
\text {and for } k=s \\
      \mathbb E[G_{k,s}] & = 
\frac{\theta_1}{k} \frac{(n-k+1)\cdots n}{(n-k+\rho)\cdots(n-1+\rho)} \frac{\theta_2}{s} sk 
       \sum_{j=1}^{n-s+1} \frac{1}{j(j-1+\rho)} \binom{n}{j}^{-1} \binom{n-k}{j-1}.
    \end{aligned}
  \end{equation*}
\end{theorem}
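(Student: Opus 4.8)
The plan is to compute $\mathbb{E}[G_{k,s}]$ by a Campbell-type / first-moment argument, integrating over the tree $\mathcal{T}$ and over the Poisson processes of gene-gain and site-mutation events. Because all gained points $(u,v)\in I\times I$ are almost surely distinct, $G_{k,s}$ is a sum of indicators over such points, and by the intensity structure in Definition~\ref{def:treeindexedMC} the expectation factorizes into (i) the expected number of genes that end up present in exactly $k$ of the $n$ leaves — which is governed by the infinitely many genes dynamics — and (ii) given such a gene, the expected number of site mutations falling on the induced subtree spanning those $k$ leaves that are in turn seen in exactly $s$ of them. Concretely, I would condition on a gene $u$ being gained on a branch of $\mathcal{T}$ at time $t$ (backward time), so that the set of leaves carrying $u$ is the set of leaves still connected to that branch after the loss process (rate $\rho/2$ per lineage) has thinned it. The gain contributes the factor $\tfrac{\theta_1}{2}\lambda_I(du)\,dt$, and the gene-frequency bookkeeping reproduces, by the argument already used for Theorem~\ref{genefreqspec}, the prefactor $\tfrac{\theta_1}{k}\cdot\tfrac{(n-k+1)\cdots n}{(n-k+\rho)\cdots(n-1+\rho)}$.

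The core new computation is the conditional site-frequency part. Given that gene $u$ is present in exactly $k$ named leaves $i_1,\dots,i_k$, site mutations accumulate along every branch on which $u$ is present at rate $\tfrac{\theta_2}{2}$, and a mutation is seen in $s$ of the $k$ gene-carrying leaves precisely when it lands on a branch that subtends exactly $s$ of those $k$ leaves. So I need the expected total length of branches of the "gene-present" subtree that subtend exactly $s$ of the $k$ carriers, times $\tfrac{\theta_2}{2}$. The subtlety, and the reason the answer is not simply $\theta_2/s$ with $n$ replaced by $k$, is that the gene-present subtree is not a Kingman $k$-coalescent: mutations on the branch above the gain point (before any loss has pruned a carrier lineage) are visible in all $k$ carriers, i.e. contribute to $G_{k,k}$, which is why the $k=s$ case looks structurally different and involves sums over the time/level $j$ at which the last non-carrier lineage detaches. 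I would therefore stratify by the number $j$ of lineages present at the relevant epoch and use the standard coalescent fact that, among $k$ leaves in a pool of larger size, the probability that a given branch at level $j$ subtends a prescribed number $s$ of them is a ratio of binomials; combined with the per-level holding times $\tfrac{1}{j(j-1+\rho)}$ (loss thins lineages at rate $\rho/2$ each, coalescence at rate $\binom{j}{2}$) this produces the sums $\sum_j \tfrac{1}{j(j-1+\rho)}\binom{n}{j}^{-1}\binom{n-k}{j-1}$ and $\sum_j \tfrac{j+1}{j+1+\rho}\binom{n-j-2}{s-1}$ appearing in the statement.

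Assembling: I would (1) set up $G_{k,s}=\sum$ of indicators over gained points and take expectation via the gain intensity; (2) separate the expectation into the gene-presence factor (citing the method behind Theorem~\ref{genefreqspec}) and the conditional expected subtended-branch-length for site mutations; (3) for the conditional part, run the coalescent-with-loss backward from the leaves, tracking at each level $j$ the number of lineages, the number carrying the gene, and whether the gain has occurred yet; (4) use the hypergeometric sampling identities to express the chance a level-$j$ branch carries exactly $s$ of the $k$ carriers; (5) integrate the per-level holding times against rate $\tfrac{\theta_2}{2}$ and sum, treating $s<k$ and $s=k$ separately because in the latter case the whole pre-gain stretch contributes; (6) simplify the resulting double sums to the stated closed forms, in particular recognizing $\tfrac{\theta_2}{s}\cdot\tfrac{k}{n}\binom{n-1}{s}^{-1}$ as the natural normalization. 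The main obstacle I expect is step (4)–(5): correctly coupling the loss-thinning of lineages with the position of the gene-gain event so that the combinatorial weights land exactly on $\binom{n-j-2}{s-1}$ versus $\binom{n-k}{j-1}$, and making sure no double-counting occurs between the "mutation subtends $s$ of $k$" event and the conditioning "gene present in exactly $k$"; getting the boundary case $s=k$ right (where mutations above the gain point must be included) is where the bookkeeping is most delicate.
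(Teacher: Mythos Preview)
Your proposal is correct and follows essentially the same strategy as the paper. The paper also factors $\mathbb{E}[G_{k,s}]$ into the gene-presence piece (reproducing the prefactor from Theorem~\ref{genefreqspec}) and a conditional site-frequency piece, stratifies by the level $j$ at which the mutation falls, and treats $s<k$ and $s=k$ separately for exactly the reason you identify.

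The chief difference is one of framing. Where you propose to work backward in time with the coalescent-with-loss and appeal to ``standard coalescent facts'' and hypergeometric identities, the paper first builds a forward-time Hoppe urn augmented with \emph{marks} (gene gains) and \emph{dots} (site mutations), proves its equivalence to the lineage-losing coalescent, and then carries out all the combinatorics inside the urn. In particular the paper introduces two intermediate indices beyond your level $j$: an index $i$ for the level at which the gain occurs, and an index $m$ for the number of marked (gene-carrying) balls present when the urn has $j$ balls in total. This extra stratification is precisely what resolves the obstacle you flag in steps (4)--(5): the coupling between the position of the gain event and the loss-thinning of lineages becomes a simple P\'olya-type count, yielding the weights $\binom{j-i}{m-1}\binom{n-j}{k-m}\binom{n-i}{k-1}^{-1}$ directly, and the ``subtends $s$ of $k$'' probability is computed purely inside the marked subtree (no $\rho$ appears there) as $\binom{k-m}{s-1}\frac{(s-1)!(m-1)\cdots(k-s-1)}{(m)\cdots(k-1)}$. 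The final closed forms then emerge after summing over $i$ and $m$ via Vandermonde-type identities. Your backward approach would reach the same sums, but the urn picture makes the no-double-counting and the $s=k$ boundary bookkeeping essentially automatic rather than delicate.
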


\subsection{Site frequency spectra for dispensable genes}

Let us define the site frequency spectrum for a dispensable gene
present in a given frequency~$k$.
\begin{definition}[Site frequency spectrum for dispensable genes]
\label{def:sfsdispensablegenes}
Given $\mathcal M_i$ from Definition \ref{def:treeindexedMC}, define $\mathcal S^u_i(.) := \mathcal M_i(u,.)$ for $i = 1,\dots,n$.
Let $u \in [0,1]$ be a gene. Then the site frequency spectrum of gene $u$ is given by $S^u_1,\dots,S^u_{F(u)}$, where
\begin{equation*}
 S^u_s := |\{v \in J : v \in \mathcal S^u_i \text{ for exactly $s$ different $i$ with $\mathcal M_i(u,0) = 1$}   \}|
\end{equation*}
and $F(u)$ is the frequency of gene $u$, i.e.\ 
$F(u) := |\{i\in \{1,\dots,n\}: \mathcal M_i(u,0) = 1 \}| $.
\end{definition}

From Theorem \ref{jointfs} we can now derive the expected site frequency spectrum for a gene in frequency~$k$.

\begin{corollary}[Conditional site frequency spectrum]
\label{condsitefreqspec}
The site frequency spectrum in genes present in exactly $k$ out of $n$ individuals is given for $s<k$ by
 \begin{equation}
  \label{EkGs}
\mathbb E[S^u_s \ | F(u) = k ] = \frac{\theta_2}{s} \frac{k}{n} \binom{n-1}{s}^{-1} \sum_{j=0}^{n-s-1} \frac{j+1}{j+1+\rho} \binom{n-j-2}{s-1}\\
 \end{equation}
and 
\begin{equation}
 \mathbb E[S^u_k\ | F(u) = k] = \frac{\theta_2}{k} k^2 \sum_{j=1}^{n-k+1}  \binom{n}{j}^{-1} \frac{1}{j(j-1+\rho)}     \binom{n-k}{j-1} \label{freqk}
\end{equation}

In particular, for $k = n > s$
\begin{equation}
  \label{EnGs}
    \mathbb E[S^u_s\ |{F(u) = n}] = \frac{\theta_2}{s} \binom{n-1}{s}^{-1} \sum_{j=0}^{n-s-1} \frac{j+1}{j+1+\rho} \binom{n-j-2}{s-1}\\
 \end{equation}
and
\begin{equation}
 \mathbb E[S^u_n\ |{F(u) = n}] = \frac{\theta_2}{\rho} \label{freqn}
\end{equation}

\end{corollary}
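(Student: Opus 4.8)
The plan is to derive Corollary~\ref{condsitefreqspec} directly from Theorem~\ref{jointfs} by a conditioning argument. The key observation is that $S^u_s$ counts site mutations in a single gene $u$ with the gene present in $F(u)$ individuals, whereas $G_{k,s}$ aggregates this count over all genes that happen to be present in exactly $k$ individuals. Since all genes evolve independently and identically (each gained uniformly on $I$, each accumulating site mutations at rate $\theta_2/2$), the expected number of $s$-frequency sites contributed by a \emph{single} gene conditioned on being present in $k$ individuals is exactly $\mathbb E[G_{k,s}]$ divided by the expected number of genes present in $k$ individuals, i.e.\ $\mathbb E[G_k]$. Formally, I would write $\mathbb E[G_{k,s}] = \mathbb E\big[\sum_{u: F(u)=k} S^u_s\big]$ and, using exchangeability of genes together with a size-biasing/Palm-calculus style argument (or simply Campbell's formula applied to the Poisson family of genes indexed by their birth times), conclude
$$\mathbb E[S^u_s \mid F(u)=k] = \frac{\mathbb E[G_{k,s}]}{\mathbb E[G_k]}.$$
Then I substitute the two formulas: Theorem~\ref{jointfs} for $\mathbb E[G_{k,s}]$ and Theorem~\ref{genefreqspec} for $\mathbb E[G_k]$ (with $\theta$ there playing the role of $\theta_1$ here).

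The algebra is then routine. For $s<k$, dividing the first line of Theorem~\ref{jointfs} by $\mathbb E[G_k] = \frac{\theta_1}{k}\frac{(n-k+1)\cdots n}{(n-k+\rho)\cdots(n-1+\rho)}$, the factors $\frac{\theta_1}{k}$, the falling factorial $(n-k+1)\cdots n$, and the product $(n-k+\rho)\cdots(n-1+\rho)$ all cancel, leaving precisely
$$\frac{\theta_2}{s}\,\frac{k}{n}\,\binom{n-1}{s}^{-1}\sum_{j=0}^{n-s-1}\frac{j+1}{j+1+\rho}\binom{n-j-2}{s-1},$$
which is \eqref{EkGs}. For $k=s$, the same cancellation applied to the second line of Theorem~\ref{jointfs} yields \eqref{freqk}. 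The special case $k=n$ in \eqref{EnGs} is just \eqref{EkGs} with $k=n$, so $k/n=1$. For \eqref{freqn} I would take $k=n$ in \eqref{freqk}: the binomial $\binom{n-k}{j-1}=\binom{0}{j-1}$ vanishes unless $j=1$, so the sum collapses to the single term $j=1$, giving $n^2 \cdot \binom{n}{1}^{-1}\frac{1}{1\cdot(0+\rho)} = n^2 \cdot \frac{1}{n}\cdot\frac{1}{\rho} = \frac{n}{\rho}$, and hence $\mathbb E[S^u_n \mid F(u)=n] = \frac{\theta_2}{n}\cdot\frac{n}{\rho} = \frac{\theta_2}{\rho}$.

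The main obstacle is making the identity $\mathbb E[S^u_s \mid F(u)=k] = \mathbb E[G_{k,s}]/\mathbb E[G_k]$ rigorous rather than heuristic. One has to be careful about what "pick a gene $u$ with $F(u)=k$" means, since the set of such genes is random and (under the infinitely many genes model) the gene pool is a continuum with a.s.\ infinitely many genes present at low frequency. The clean way is to work on the level of the random measure on $I$ (or on gene birth-times) describing present genes, use the fact that site mutations within distinct genes are conditionally independent given the coalescent and the birth times, and invoke a Palm/Campbell identity so that the conditional expectation picks out the per-gene intensity; alternatively, one can restrict to genes with a frequency label and let the defining intensity of $G_{k,s}$ factor through the intensity of $G_k$. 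I would state this as a short lemma (or reference the analogous step already used for $\mathbb E[G_k]$ in \citep{BaumdickerHessPfaffelhuber2010}) and then the rest is the bookkeeping above.
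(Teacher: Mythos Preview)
Your proposal is correct and follows essentially the same route as the paper: both establish the key identity $\mathbb E[S^u_s \mid F(u)=k] = \mathbb E[G_{k,s}]/\mathbb E[G_k]$ and then substitute Theorems~\ref{jointfs} and~\ref{genefreqspec}, with the special cases handled exactly as you describe. The only cosmetic difference is in justifying that identity: where you invoke Palm/Campbell machinery, the paper argues more directly by writing $\mathbb E[G_k] = \int_I \mathbb P(F(u)=k)\,du$ and $\mathbb E[G_{k,s}] = \int_I\int_J \mathbb P(S^u_{s,dv}=1,\,F(u)=k)\,du$, observing that the integrands are constant in $u$ (since marks in $du$ occur at rate $\tfrac{\theta_1}{2}\,du$ independently of everything else), so the ratio over $I$ cancels and one is left with the desired conditional expectation.
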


Figure \ref{sfsbunt} shows the site frequency spectrum for dispensable genes in frequency $k=5$ and $k=10$ for $n=10$.

\begin{remark}
 For $\rho > 0$ the site frequency spectrum for dispensable genes differs from the classical result,
 even for dispensable genes which are present in all individuals of the sample.
While this seems counterintuitive at the first glimpse, it is easily explained.
Unlike essential core genes, it is in principle possible for a dispensable gene to get lost. Therefore, a dispensable gene, which is present in all individuals, is more likely to occur
if the underlying coalescent has small branch lengths.
As mutations are gained along the branches of the coalescent, dispensable genes in frequency $n$ will on average carry less mutations than essential core genes.
Of course, this is only the case if two independent genes from different populations are compared.
In this case the underlying coalescent of the dispensable gene does not depend on the coalescent of the essential core gene.
\end{remark}

\begin{remark}
 Note that, for $\rho \to 0$ and $k=n$, equation \eqref{EnGs} converges to the classical site frequency spectrum $\frac{\theta_2}{s}$.
 In contrast, if $k$ is smaller than $n$, equation \eqref{EkGs} does not converge to the classical site frequency spectrum.
 This can be easily seen from coalescent theory.
 Genes present in all individuals are gained at some time before the MRCA, while genes in frequency smaller than $n$ are gained along branches within Kingman's coalescent.
 A subtree with $k$ leaves within a $n$-coalescent is only a $k$-coalescent, if the $k$ leaves are chosen randomly from the $n$ available leaves.
 This is clearly not the case, if the $k$ leaves are chosen such that they form a nested cluster,
 a situation which appears when considering leaves carrying the same gene and small small loss rates.
\end{remark}

\begin{remark}
 In equation \eqref{EnGs} we can derive the same result for $s=1$ by a simple approach.
Therefore start a Kingman coalescent with $n$ lineages, where each line gets lost at rate $\frac{\rho}{2}$.
In Section \ref{proofs} we will have a closer look at this lineage loosing coalescent.
 For $m=2,\dots,n$ add up the expected times the lineage loosing coalescent has $m$ lineages, $\mathbb E[T_m] = \tfrac{2}{m(m-1)+m\rho}$, multiplied
by the expected number of external branches in a $n$-coalescent, when there are $m$ lineages left. The former quantity is given by $\tfrac{m(m-1)}{n-1}$, see \cite{Janson2011}. 
\end{remark}

\begin{remark}
 In Theorem \ref{jointfs} as well as in \eqref{freqk} and \eqref{freqn} in Corollary \ref{condsitefreqspec} we have given formulas 
for the expected number of mutated sites present in $k$ out of $k$ individuals.
 The corresponding size in the classical site frequency spectrum of an essential core gene would be infinitely large,
 as the essential gene accumulated site mutations for an infinite time before the MRCA of the $k$ individuals.
 In order to identify the mutated sites present in all individuals one would need a reference sequence, 
 like the ancestral sequence at the time the gene was introduced into the population. As this sequence is most likely not available in practice
 normally only site mutations in frequency $s<k$ are considered.
\end{remark}

\section{The effect on estimates}
\label{effectonestimates}

\begin{figure}
\begin{overpic}[width=0.33\textwidth]{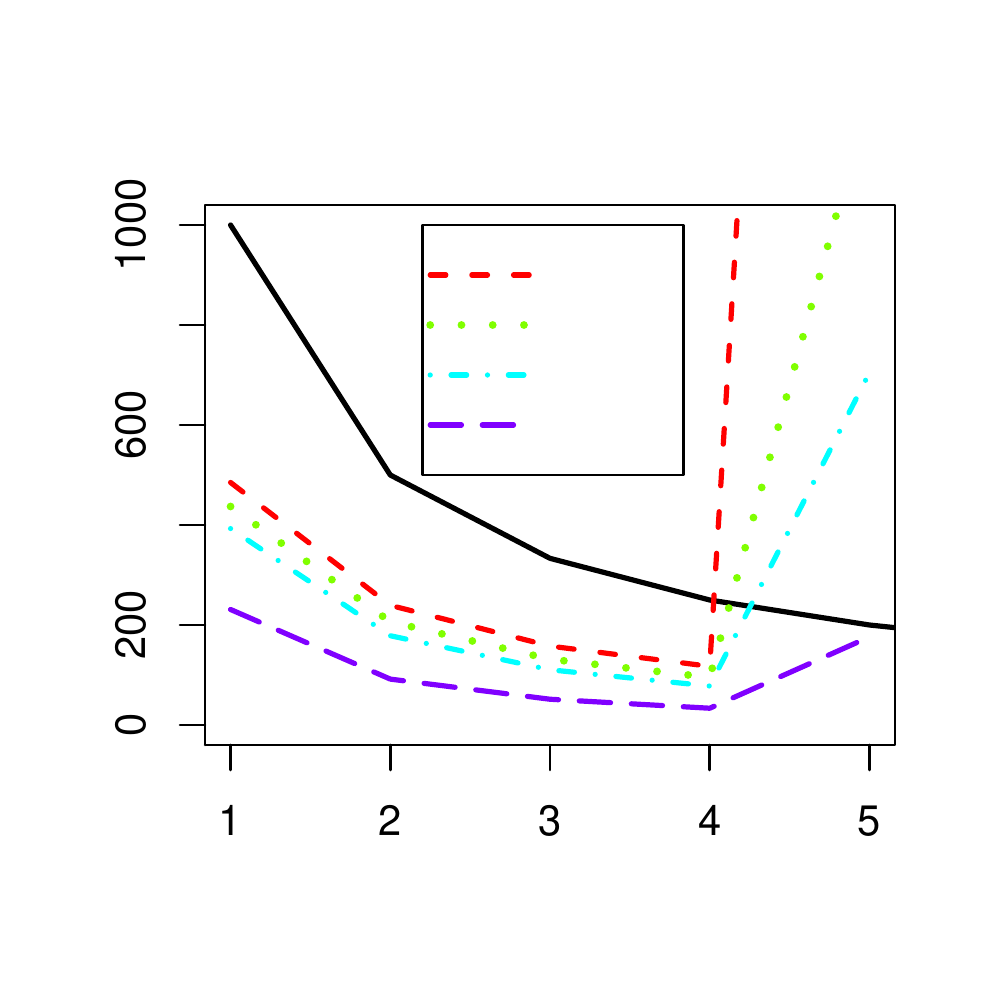}
\put(25,8){\footnotesize$s$ -- frequency of SNP}
\put(0,32){\rotatebox{90}{\footnotesize$\mathbb E[S^u_s \ | F(u) = k ]$}}
\put(54,72){\tiny$\rho\!=\!0.1$}
\put(54,67){\tiny$\rho\!=\!0.5$}
\put(54,62){\tiny$\rho\!=\!1$}
\put(54,57){\tiny$\rho\!=\!5$}
\end{overpic}
\begin{overpic}[width=0.66\textwidth]{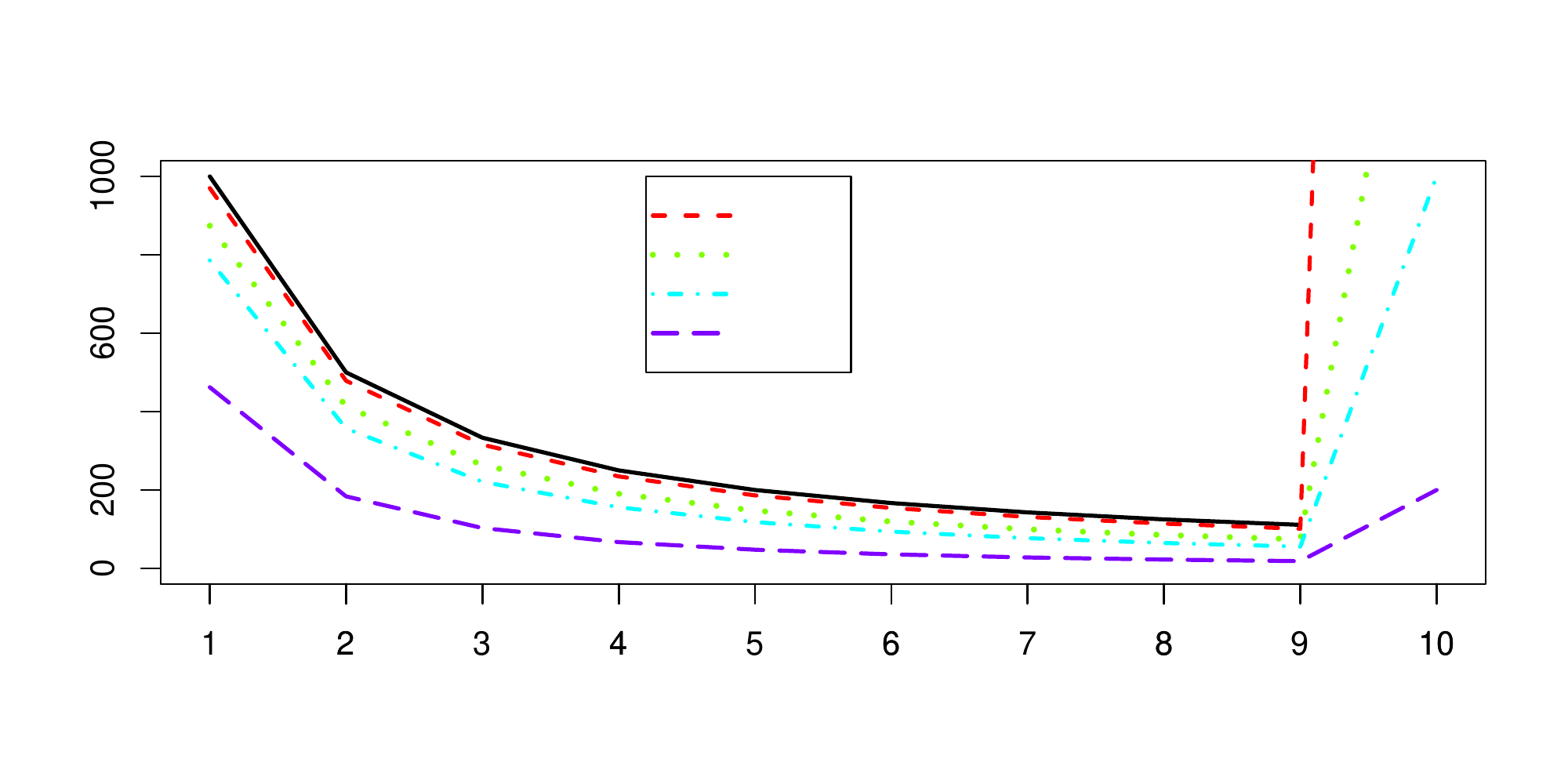}
\put(36,4){\footnotesize$s$ -- frequency of SNP}
\put(0,16){\rotatebox{90}{\footnotesize$\mathbb E[S^u_s \ | F(u) = k ]$}}
\put(47,36){\tiny$\rho\!=\!0.1$}
\put(47,33.5){\tiny$\rho\!=\!0.5$}
\put(47,31){\tiny$\rho\!=\!1$}
\put(47,28.5){\tiny$\rho\!=\!5$}
\end{overpic}
\caption{The graph shows the classical site frequency spectrum (solid black)
and the site frequency spectrum for dispensable genes for $\rho$ equal to $0.1$ (-\,-\,-), $0.5$ ($\cdots$), $1$ ($\cdot$\,-\,$\cdot$\,-) and $5$ (\textendash\ \textendash).
The right graph shows the frequency spectrum for genes present in all $n=10$ individuals, while the left graph shows the spectrum for 
genes present in 5 out of 10 individuals.
The mutation rate is fixed at $\theta_2 =1000$.
The site frequency spectrum for dispensable genes present in all individuals differs from the classical site frequency spectrum.
The higher the probabilty to loose a gene, the less mutated sites are expected. 
If $\rho$ gets small the site frequency spectrum converges to the classical site frequency spectrum only if $k=n$ (right graph). \label{sfsbunt}}
\end{figure}

In this section we will highlight the consequences of Corollary \ref{condsitefreqspec} for estimators based on the site frequency spectrum.
It is possible to define many different estimators for the scaled site mutation rate $\theta_2$ based on the observed site frequencies.
Here we will focus on two commonly used estimators, and bear in mind that similar results will hold for any frequency based estimator.
We first introduce the estimators for the sequence of an essential core gene, and thereafter turn to the dispensable gene sequence.

A frequently used estimate for $\theta_2$ is given by Watterson's estimator \citep{Watterson1975},  which is defined by
\begin{equation}
 \widehat \theta_{W,k} := \frac{S_{\text{seg}}}{\sum_{s=1}^{k-1} \frac{1}{s}}.
\end{equation}
Here $S_{\text{seg}}$ is the total number of segregating sites observed in the sample of size $k$.
For an essential core gene $u$, we have $S_{\text{seg}} =  \sum_{s=1}^{k-1}  C_s^u $ and so the expected number of segregating sites is given by
\begin{equation}
      \mathbb E[S_{\text{seg}}] = \sum_{s=1}^{k-1} \mathbb E[ C_s^u ] = \sum_{s=1}^{k-1}  \frac{\theta_2}{s}
\end{equation}
and thus $\mathbb E[\widehat \theta_{W,k} ] = \theta_2$.

Another famous estimator for $\theta_2$ is given by $\widehat \pi$, sometimes referred to as Tajima's estimator \citep{Tajima1983}.
The estimator $\widehat \pi$ equals the sum of pairwise observed SNPs defined by
\begin{equation}
 \widehat \pi_k := \binom{k}{2}^{-1} \sum_{i<j}^k \pi_{ij},
\end{equation}
with $\pi_{ij}$, the number of sites which differ between individual $i$ and individual $j$.
Given an essential core gene $u$ the expectation of $\widehat \pi$ is again given by
\begin{align*}
 \mathbb E[\widehat \pi_k] = \sum_{s=1}^{k-1} \mathbb E[C_s^u] \frac{s(k-s)}{\binom{k}{2}} = \theta_2 \frac{2}{k(k-1)} \sum_{s=1}^{k-1} (k-s) = \theta_2.
\end{align*}

In contrast we have the following result if, instead of an essential core gene, a dispensable genes $u$ is given.

\begin{theorem}[Estimating the site mutation rate in a dispensable gene]
\label{dispestimates}
Consider a dispensable gene $u$, which appears in $k$ out of $n$ individuals within the sample.
Given $F(u)=k$ the site frequency spectrum $S_1^u,\dots,S_{k-1}^u$ is given by Definition \ref{def:sfsdispensablegenes}
and we set 
\begin{equation*}
  S_{\text{seg}} := \sum_{s=1}^{k-1} S_s^u \\
\end{equation*}
The expected number of segregating sites in a dispensable gene is now given by
\begin{align*}
  \mathbb E[S_{\text{seg}}] &= \sum_{s=1}^{k-1} \mathbb E[ S_s^u | F(u) = k  ] \\
  &= \theta_2 \frac{k}{n} \sum_{s=1}^{k-1} \frac{1}{s} \binom{n-1}{s}^{-1} \sum_{j=0}^{n-s-1} \frac{j+1}{j+1+\rho} \binom{n-j-2}{s-1}
\end{align*}

Consequently, the expected value of Watterson's estimator for a dispensable gene $u$ in $k$ out of $n$ individuals is given by
\begin{equation}
 \label{eq:wattersondisp}
 \mathbb E[\widehat \theta_{W,k}] = \theta_2 \frac{\frac{k}{n} \sum_{s=1}^{k-1} \frac{1}{s} \binom{n-1}{s}^{-1} \sum_{j=0}^{n-s-1} \frac{j+1}{j+1+\rho} \binom{n-j-2}{s-1}}{\sum_{s=1}^{k-1} \frac{1}{s}} \leq \frac{k}{n} \theta_2 .
\end{equation}

For Tajimas estimator $\widehat \pi_k := \sum_{i<j}^k \pi_{ij}$, numbering the individuals which carry gene $u$ by $1,\dots,k$, the expectation is given by
\begin{align}
\mathbb E[\widehat \pi_k] &=  \sum_{s=1}^{k-1} \mathbb E[S_s^u| F(u) = k] \frac{s(k-s)}{\binom{k}{2}}  \notag \\
        &= \theta_2  \frac{2}{k(k-1)} \frac{k}{n} \sum_{s=1}^{k-1} (k-s)  \binom{n-1}{s}^{-1} \sum_{j=0}^{n-s-1}   \frac{j+1}{j+1+\rho} \binom{n-j-2}{s-1} \leq \frac{k}{n}\theta_2 \label{eq:tajimadisp}
\end{align}

\end{theorem}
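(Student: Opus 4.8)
The plan is to read off the three displayed identities from Corollary~\ref{condsitefreqspec} by linearity of expectation, and then to reduce both bias inequalities to one elementary estimate on the inner sum appearing in \eqref{EkGs}.

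\textbf{The moment identities.} By definition $S_{\text{seg}} = \sum_{s=1}^{k-1} S_s^u$, so linearity of expectation together with the conditional spectrum \eqref{EkGs} immediately yields the stated formula for $\mathbb E[S_{\text{seg}}]$, and dividing by the deterministic normalisation $\sum_{s=1}^{k-1}\tfrac1s$ gives $\mathbb E[\widehat\theta_{W,k}]$. For Tajima's estimator I would first record the combinatorial bookkeeping $\sum_{i<j}^{k}\pi_{ij} = \sum_{s=1}^{k-1} s(k-s)\,S_s^u$: a site carried by exactly $s$ of the $k$ individuals possessing $u$ differs in precisely $s(k-s)$ of the $\binom k2$ pairs. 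Taking expectations, dividing by $\binom k2$, and inserting \eqref{EkGs} produces the displayed formula for $\mathbb E[\widehat\pi_k]$.

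\textbf{The key estimate.} Both inequalities follow once one shows, for $\rho\ge 0$ and each $1\le s\le k-1$,
\begin{equation*}
  a_s \;:=\; \binom{n-1}{s}^{-1}\sum_{j=0}^{n-s-1}\frac{j+1}{j+1+\rho}\binom{n-j-2}{s-1}\;\le\;1 .
\end{equation*}
Since $\tfrac{j+1}{j+1+\rho}\le 1$ for every $j\ge 0$, it suffices to bound $\sum_{j=0}^{n-s-1}\binom{n-j-2}{s-1}$. Reindexing by $m=n-j-2$ turns this into $\sum_{m=s-1}^{n-2}\binom{m}{s-1}=\binom{n-1}{s}$ by the hockey-stick identity, so $a_s\le 1$, with equality precisely when $\rho=0$. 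Concluding: for Watterson, $\mathbb E[\widehat\theta_{W,k}]=\theta_2\,\tfrac kn\,\bigl(\sum_{s}a_s/s\bigr)\big/\bigl(\sum_s 1/s\bigr)\le\theta_2\,\tfrac kn$; for Tajima, $\mathbb E[\widehat\pi_k]=\theta_2\,\tfrac{2}{k(k-1)}\,\tfrac kn\sum_{s=1}^{k-1}(k-s)a_s\le\theta_2\,\tfrac{2}{k(k-1)}\,\tfrac kn\sum_{s=1}^{k-1}(k-s)$, and $\sum_{s=1}^{k-1}(k-s)=\binom k2$ makes the right-hand side equal to $\theta_2\,\tfrac kn$.

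The only non-routine point is recognising that the inner sum collapses via the hockey-stick identity after the substitution $m=n-j-2$; once that is in hand the inequalities are immediate, and one moreover sees that equality holds exactly for $\rho=0$, consistently with the classical $\mathbb E[\widehat\theta_W]=\mathbb E[\widehat\pi]=\theta_2$ of an essential core gene. A secondary thing to watch is carrying the $\binom k2^{-1}$ normalisation of $\widehat\pi_k$ consistently through the $\tfrac{s(k-s)}{\binom k2}$ weights in the statement.
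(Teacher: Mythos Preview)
Your proposal is correct and follows essentially the same route as the paper: derive the moment identities by substituting Corollary~\ref{condsitefreqspec} and linearity, then obtain both upper bounds from the single estimate $\tfrac{j+1}{j+1+\rho}\le 1$ together with the hockey-stick identity $\sum_{j=0}^{n-s-1}\binom{n-j-2}{s-1}=\binom{n-1}{s}$. The paper packages the latter step as a separate corollary showing $\mathbb E[S^u_s\mid F(u)=k]\le\tfrac{k}{n}\tfrac{\theta_2}{s}$, but the content is identical; your explicit reindexing $m=n-j-2$ and naming of the hockey-stick identity is in fact a slight improvement in clarity over the paper's terse display.
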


In particular, for a dispensable gene $u$ in frequency $k<n$, Watterson's estimator as well as Tajima's estimator will both underestimate the scaled site mutation rate $\theta_2$.

In the Theorem above we set $\rho = 0$ to get an upper limit for the expectation.
In fact, if $\rho = 0$, we get 
\begin{equation*}
\mathbb E[S_s^u| F(u) = k] =  \frac{k}{n} \mathbb E[C_s^u],
\end{equation*}
such that in \eqref{eq:wattersondisp} and \eqref{eq:tajimadisp} equality holds.

While for larger $k$ the bias of the estimators gets smaller, the difference between 
$\widehat \theta_W $ and $\widehat \pi$ increases if $\rho > 0$.
This effect is best illustrated setting the gene frequency to $k=2$ or $k=n$ in Theorem~\ref{dispestimates}.

\begin{corollary}
 Setting $k = n$ in Theorem \ref{dispestimates} gives
\begin{align*}
 \mathbb E[\widehat \theta_{W,n}] &= \theta_2 \frac{\sum_{s=1}^{n-1} \frac{1}{s+\rho}}{\sum_{s=1}^{n-1} \frac{1}{s}}\\
 \mathbb E[\widehat \pi_n] &=  \theta_2 \left( 1- 2 \rho  \frac{n+1}{n-1} \sum_{j=0}^{n-2} \frac{1}{(j+1+\rho)(j+2)(j+3)} \right) 
\end{align*}
 While setting $k = 2$ results in
 \begin{align*}
 \mathbb E[\widehat \theta_{W,2}] = \mathbb E[\widehat \pi_2] =\theta_2 \binom{n}{2} \sum_{j=0}^{n-2} \frac{j+1}{j+1+\rho}
\end{align*}
\end{corollary}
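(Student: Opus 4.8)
The plan is to derive all four formulas by substituting $k=n$ and $k=2$ into Theorem~\ref{dispestimates} (equivalently into Corollary~\ref{condsitefreqspec}) and evaluating the resulting sums. Everything reduces to two combinatorial identities: for every integer $j$ with $0\le j\le n-2$ one has
\begin{equation*}
\sum_{s=1}^{n-j-1}\frac{1}{s}\binom{n-1}{s}^{-1}\binom{n-j-2}{s-1}=\frac{1}{j+1}
\end{equation*}
and
\begin{equation*}
\sum_{s=1}^{n-j-1}(n-s)\binom{n-1}{s}^{-1}\binom{n-j-2}{s-1}=\frac{n(n+1)}{(j+2)(j+3)}.
\end{equation*}
Both are proved by the same elementary route: write $\binom{n-1}{s}^{-1}\binom{n-j-2}{s-1}$ in factorials, set $t=s-1$ and $r=n-j-2$, rewrite the remaining ratios of factorials as single binomial coefficients, and apply the hockey-stick identity $\sum_{i=m}^{M}\binom{i}{m}=\binom{M+1}{m+1}$; for the second identity one additionally needs $\sum_{i=m}^{M}(M-i)\binom{i}{m}=\binom{M+1}{m+2}$ followed by Pascal's rule.

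For $k=n$ and Watterson's estimator, $\mathbb E[\widehat\theta_{W,n}]=\big(\sum_{s=1}^{n-1}\tfrac1s\big)^{-1}\sum_{s=1}^{n-1}\mathbb E[S^u_s\mid F(u)=n]$. Inserting \eqref{EnGs} into the numerator, interchanging the $s$- and $j$-summations, and applying the first identity collapses the inner $s$-sum to $\tfrac1{j+1}$, so the numerator equals $\theta_2\sum_{j=0}^{n-2}\tfrac1{j+1+\rho}=\theta_2\sum_{s=1}^{n-1}\tfrac1{s+\rho}$, which is the claim. For Tajima's estimator, $\mathbb E[\widehat\pi_n]=\tfrac{2}{n(n-1)}\sum_{s=1}^{n-1}s(n-s)\,\mathbb E[S^u_s\mid F(u)=n]$, and since $s\,\mathbb E[S^u_s\mid F(u)=n]=\theta_2\binom{n-1}{s}^{-1}\sum_{j=0}^{n-s-1}\tfrac{j+1}{j+1+\rho}\binom{n-j-2}{s-1}$ by \eqref{EnGs}, interchanging sums and applying the second identity gives $\mathbb E[\widehat\pi_n]=\tfrac{2\theta_2(n+1)}{n-1}\sum_{j=0}^{n-2}\tfrac{j+1}{(j+1+\rho)(j+2)(j+3)}$. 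Writing $\tfrac{j+1}{j+1+\rho}=1-\tfrac{\rho}{j+1+\rho}$ and telescoping $\sum_{j=0}^{n-2}\tfrac{1}{(j+2)(j+3)}=\tfrac12-\tfrac1{n+1}=\tfrac{n-1}{2(n+1)}$ turns the first part into $\theta_2$ and leaves precisely the stated correction term $-2\theta_2\rho\tfrac{n+1}{n-1}\sum_{j=0}^{n-2}\tfrac{1}{(j+1+\rho)(j+2)(j+3)}$.

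For $k=2$ the computation is immediate: only $s=1$ occurs, and there both estimators coincide with $S^u_1$ (we have $\widehat\theta_{W,2}=S_{\text{seg}}=S^u_1$ because $\sum_{s=1}^{1}\tfrac1s=1$, and $\widehat\pi_2=\pi_{12}=S^u_1$ since with two copies every segregating site has frequency $1$). Hence the common expectation is $\mathbb E[S^u_1\mid F(u)=2]$, and evaluating \eqref{EkGs} at $k=2$, $s=1$ (where $\binom{n-j-2}{0}=1$ and $\binom{n-1}{1}^{-1}=\tfrac1{n-1}$) yields $\tfrac{2\theta_2}{n(n-1)}\sum_{j=0}^{n-2}\tfrac{j+1}{j+1+\rho}$, the asserted value. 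The only non-routine point, and the natural place for indexing or sign slips, is the proof of the second combinatorial identity: the extra linear weight $n-s$ means the reindexed sum is no longer a bare hockey-stick sum but needs the $\sum_i(M-i)\binom{i}{m}$ evaluation together with a Pascal step; once that identity is in hand, the rest is substitution and a telescoping sum.
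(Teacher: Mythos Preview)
Your argument is correct. The paper states this Corollary without proof, presenting it as a direct specialization of Theorem~\ref{dispestimates}; you have supplied the combinatorial details that make this specialization explicit. Your two identities
\[
\sum_{s=1}^{n-j-1}\tfrac{1}{s}\binom{n-1}{s}^{-1}\binom{n-j-2}{s-1}=\tfrac{1}{j+1},
\qquad
\sum_{s=1}^{n-j-1}(n-s)\binom{n-1}{s}^{-1}\binom{n-j-2}{s-1}=\tfrac{n(n+1)}{(j+2)(j+3)}
\]
are valid (both reduce, after the substitution $u=r-t$ with $r=n-j-2$, to hockey-stick sums of the form $\sum_{u=0}^r\binom{j+u}{j}$ and $\sum_{u=0}^r(r+1-u)\binom{j+1+u}{j+1}$), and the telescoping step $\sum_{j=0}^{n-2}\frac{1}{(j+2)(j+3)}=\frac{n-1}{2(n+1)}$ is exactly what is needed to extract the leading $\theta_2$ in the $\widehat\pi_n$ formula.

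One point worth flagging: your $k=2$ computation gives
\[
\mathbb E[\widehat\theta_{W,2}]=\mathbb E[\widehat\pi_2]=\theta_2\binom{n}{2}^{-1}\sum_{j=0}^{n-2}\frac{j+1}{j+1+\rho},
\]
whereas the Corollary as printed has $\binom{n}{2}$ without the inverse. Your version is the correct one: at $\rho=0$ it yields $\tfrac{2}{n}\theta_2$, matching the bound $\tfrac{k}{n}\theta_2$ of Theorem~\ref{dispestimates}, while the printed formula would blow up like $n^2$. So your derivation in fact exposes a typo in the stated result.
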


Figure \ref{estimatefigs} illustrates the results of Theorem \ref{dispestimates}.

\begin{figure}
 \begin{center}
 \begin{overpic}[width = 0.7\textwidth,trim=0 30 0 50,clip]{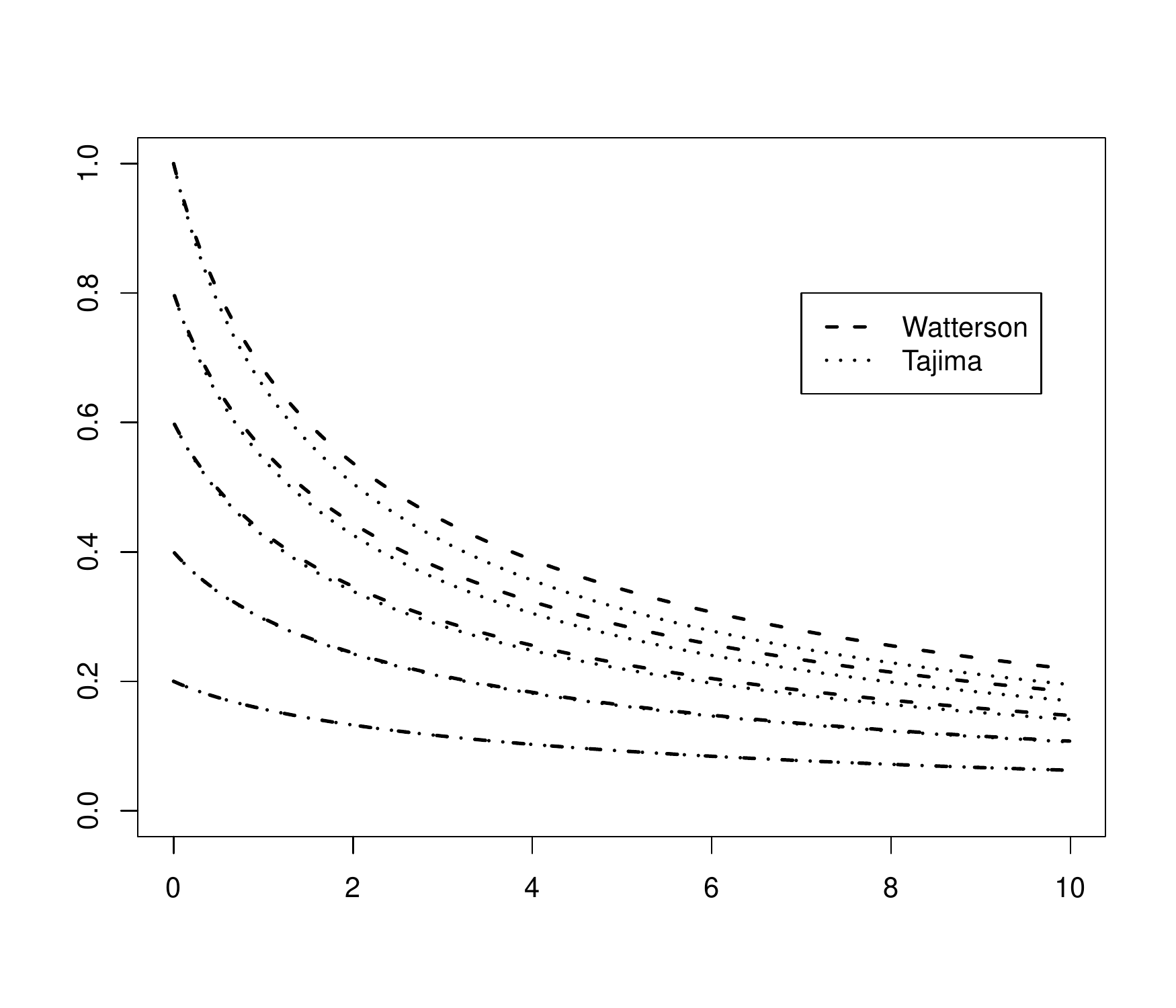}
  \put(52,0) {$\rho$}
 \put(0,25) {\rotatebox{90}{$\mathbb E[\widehat \theta_{W,k}] / \theta_2$ and $\mathbb E[\widehat \pi_k] / \theta_2$ }}
 \end{overpic}
 \end{center}
\caption{The effect of $\rho$ and $k$ on the two estimators is shown. The solid line shows Wattersons estimator for $k \in \{2,4,6,8,10\}$,
 while the dashed line shows Tajimas estimator. The lowest line is for $k=2$, the uppermost line shows the estimators for $k=n=10$.
 The sample size is fixed at $n=10$. \label{estimatefigs}}
\end{figure}

\section{Discussion}
\label{discussion}
The $k$ individuals which possess a dispensable gene span a subtrees of Kingman's $n-$coalescent, which is nested if $\rho \to 0$. This subtrees differs from Kingman's $k$-coalescent as the $k$ individuals are not chosen independently.
The subtree has lower depth than a coalescent, and the relative branch lengths are changed. This difference causes the distortion of the estimates.
There is a monotonic relationship between the frequency of a dispensable gene (the gene loss rate $\rho$) and the estimate of $\theta_2$.
The lower (larger) the frequency of a dispensable gene (the gene loss rate $\rho$) is, the larger is the bias for the estimate of the scaled site mutation parameter $\theta_2$.

Although $\widehat \theta_W$ and $\widehat \pi$ are both negatively biased the relation $\mathbb E[\widehat \theta_W] = \mathbb E[\widehat \theta_W]$ does no longer hold for $\rho > 0$ and $k > 2$.
In contrast to the bias the difference between the two estimators $\widehat \pi - \widehat \theta_W $ gets larger the larger $k$ is and shows a non-monotonic behavior for $\rho$.
The normalized difference between $\widehat \theta_W$ and $\widehat \pi$ is well known as Tajima's D \citep{Tajima1989a}.
Tajima's D is capable of detecting non-neutral evolving sequences. For example, negative values hint at purifying selection, while positive values suggest balancing selection.
As $\mathbb E[\widehat \theta_W] \neq \mathbb E[\widehat \pi]$, Tajima's D is biased for dispensable genes.
Our results show that in dispensable genes an excess of singleton and low frequency site mutations should be expected. 
Figure \ref{sfsnorm} illustrates the excess of singleton site mutations for different gene loss rates.
\begin{figure}
\begin{overpic}[width=0.33\textwidth,trim=0 40 15 40,clip]{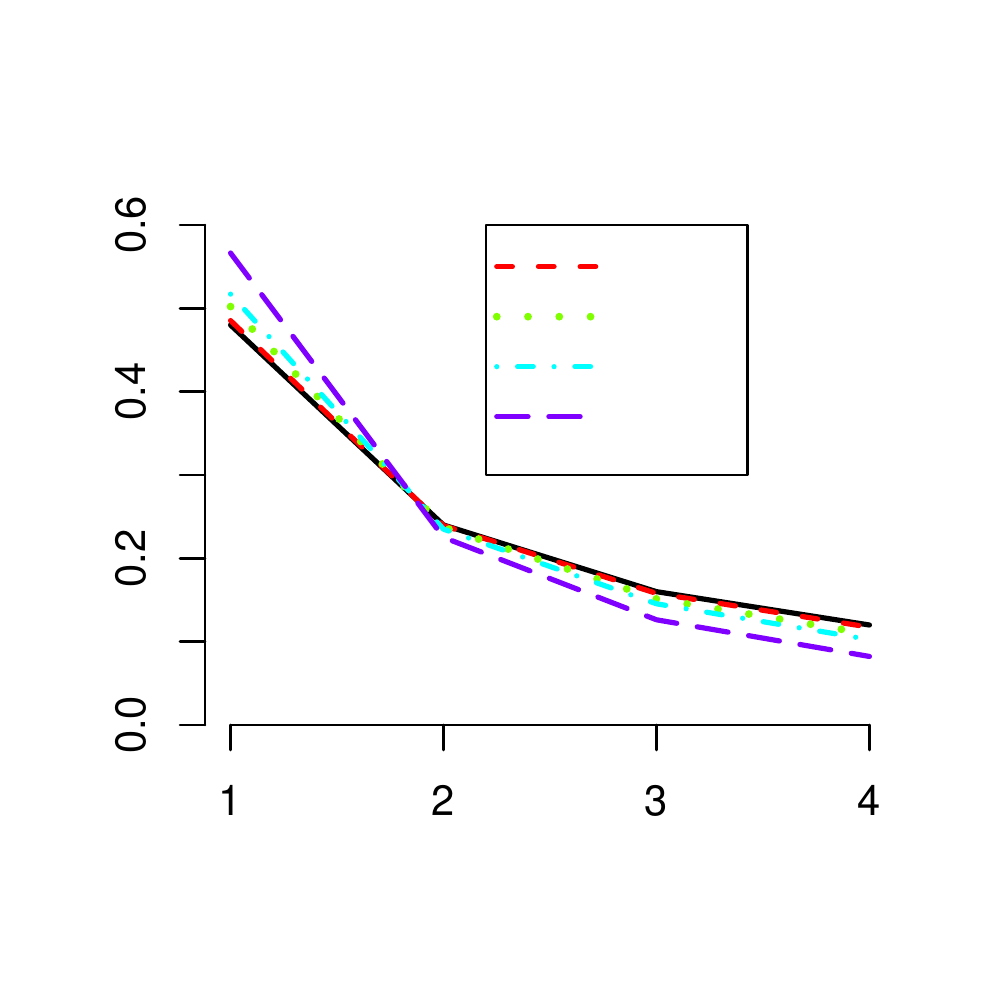}
\put(32,0){\footnotesize$s$ -- frequency of SNP}
\put(64.5,62){\tiny$\rho\!=\!0.1$}
\put(64.5,56.5){\tiny$\rho\!=\!0.5$}
\put(64.5,51){\tiny$\rho\!=\!1$}
\put(64.5,46){\tiny$\rho\!=\!5$}
\end{overpic}
 \hfill
\begin{overpic}[width=0.66\textwidth,trim=0 40 30 40,clip]{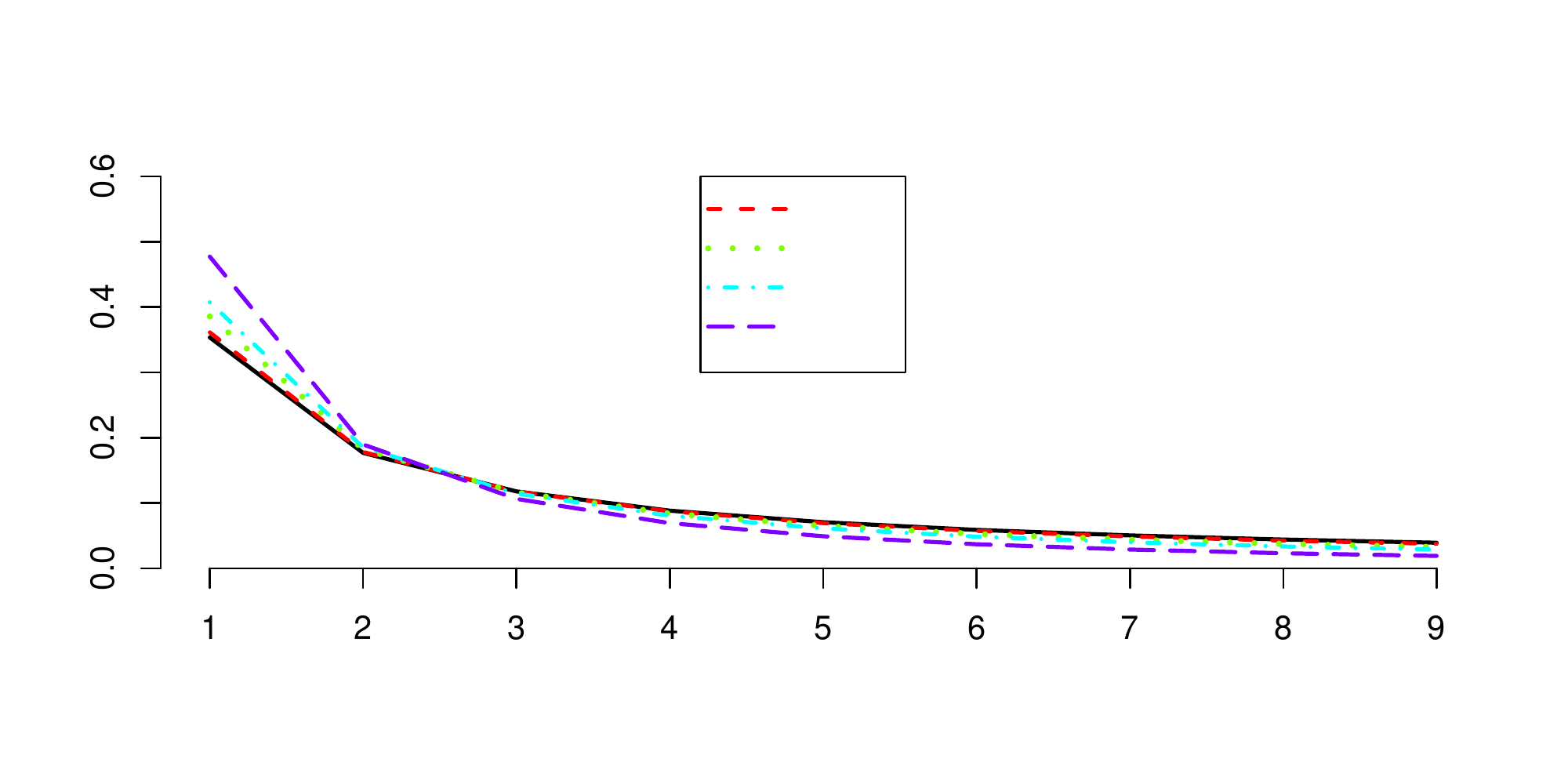}
\put(40,0){\footnotesize$s$ -- frequency of SNP}
\put(53.5,31){\tiny$\rho\!=\!0.1$}
\put(53.5,28.25){\tiny$\rho\!=\!0.5$}
\put(53.5,25.5){\tiny$\rho\!=\!1$}
\put(53.5,23){\tiny$\rho\!=\!5$}
\end{overpic}
\caption{\label{sfsnorm} The expected site frequency spectrum from Figure \ref{sfsbunt} is shown, but has been normalized by the expected number of segregating sites. } 
\end{figure}
The higher proportion of low frequency site mutations in dispensable genes, results in negative values for Tajima's D, if the standard estimators $\widehat \theta_W$ and $\widehat \pi$ are used.
Using Tajima's D to detect purifying selection among dispensable gene sequences will thus be less accurate and lead to an increased number of false positives.
In Figure \ref{TajiDfig} a simulation of the effect on Tajima's D is shown.
\begin{figure}
 \includegraphics[width=0.5\textwidth]{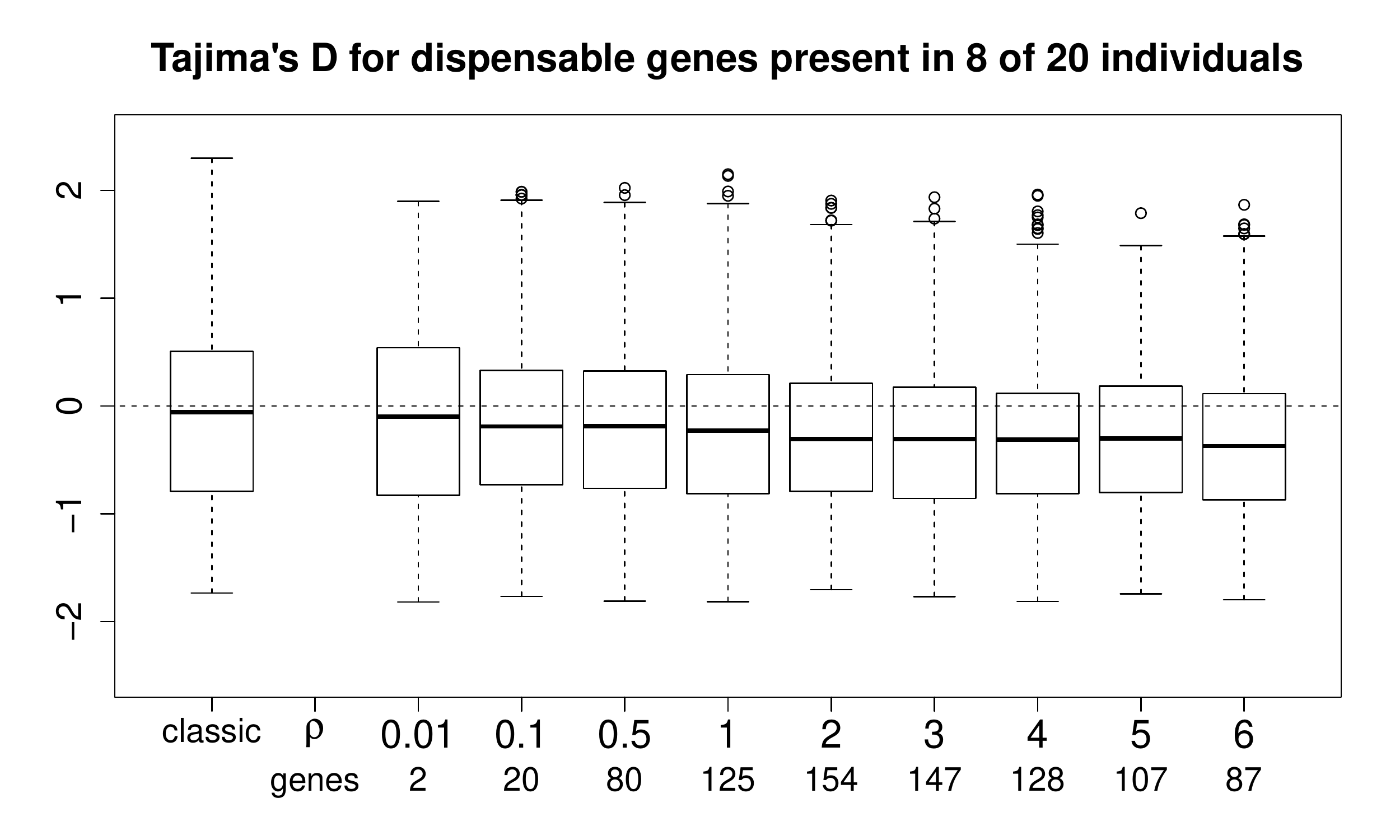} \includegraphics[width=0.5\textwidth]{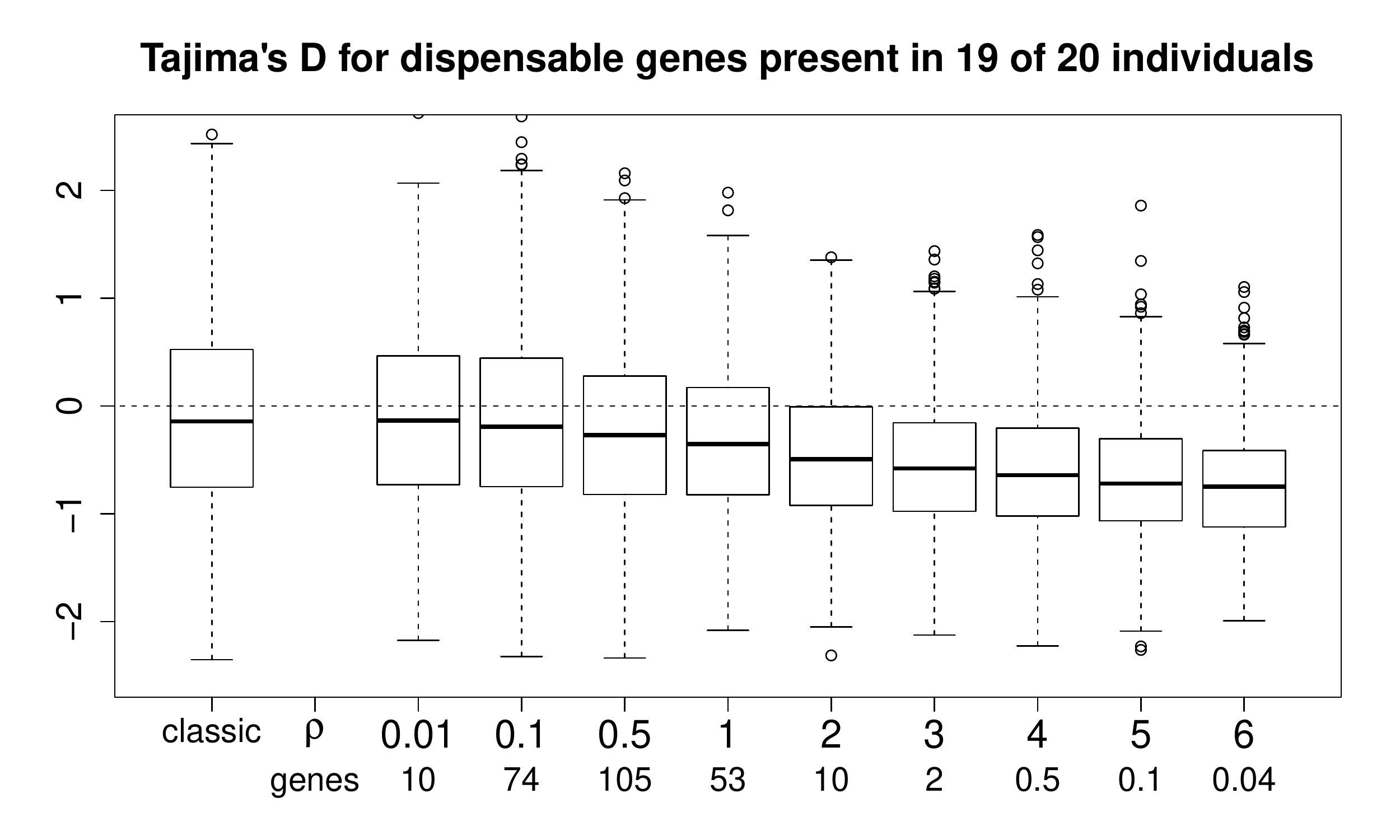}
\caption{Simulated values of Tajima's D in dispensable genes are shown for different values of $\rho$. The sample size is $n=20$ and the frequency of the dispensable gene is $k=8$ 
   on the left side, while $k=19$ on the right side. The site mutation rate $\theta_2 = 10000$ and the gene gain rate is given by $1000\rho$ such that the average number of genes in an individual is fixed at 1000.
   The average number of genes in frequency $k$ is shown in the second line beneath the according value of $\rho$.
   Note that results for arbitrary large (small) $\rho$ are somehow artificial as the number of dispensable genes in higher (lower) frequencies will converge to zero in this case. \label{TajiDfig}}
\end{figure}

Tajima's D is often used in a sliding window along a chromosome such that a general bias does not hide the outliers \citep{Oleksyk2010}.
While this is a reasonable procedure for essential core genes,
it is not clear whether one can identify outliers in dispensable gene sequences of different gene frequencies.
Even if we pool dispensable genes of frequency $k$ and compare Tajima's D between these genes, Tajima's D surely depends on the subset which possesses the gene.
Two dispensable genes of the same frequency $k \leq \tfrac{n}{2}$ might well be present in disjunct parts or in exactly the same subset of the sample.
Pooling all genes present in a certain subset of the population might thus be the more promising approach to compare Tajima's D between dispensable genes.

Another frequently used statistic to detect genes under selection is the 
ratio of non-synonymous and synonymous SNPs \cite{Yang2000}.
In \cite{Kryazhimskiy2008} and \cite{Rocha2006} an effect occurring in closely related datasets is elaborated. 
If individuals of a sample are too closely related, selection has not enough time to eliminate all negative mutations and thus the
power of dN/dS ratio tests, if at all applicable, is lower than in distinct lineage samples.
In this work we also considered closely related lineages, such that some genes are in intermediate frequency.
Thus we expect that similar issues will effect the dN/dS ratios in dispensable genes.
In contrast to the work of \citeauthor{Kryazhimskiy2008} and \citeauthor{Rocha2006} we focused on the general occurrence of neutral site mutations in dispensable genes and ignored selective effects.
Thus in our neutral framework the expected dN/dS ratio for a dispensable gene is always equal to one, regardless of the frequency of the gene.

\subsection{Conclusion}
Ignoring selective effects we allowed neutral 
genes sequences to get introduced and lost along the ancestral lineages, and inferred the resulting changes for the site frequency spectrum.
We showed that the site frequency spectrum for dispensable genes differs from the spectrum for essential core genes.
Our results reveal that frequently used estimators for the site mutation rate and tests, like Tajimas~D,
do not apply to dispensable gene sequences and will produce biased results.

The main incentive for this work was to improve analyses of site frequency spectra for prokaryotic dispensable genes.
Nonetheless, the presented results should also apply to similar phenomena in eukaryotic DNA sequences.
For instance, a huge variation of genome sizes has been detected in the swedish \emph{Arabidopsis thaliana} population \citep{Long2013}. 
Our theory might well apply to the additional/dispensable parts within such genomes.
Although the model does not yet cover copies of gene sequences and each gained gene is a new one,
another relevant field might be the analyses of copy number variations (CNVs), e.g. in the human genome \citep{Redon2006,Freeman2006},
In fact similar results should hold for any kind of site frequency analyses,
where the considered individuals are correlated, in our case by gene gain, and form a, not necessarily nested, subtree within Kingman's coalescent.

\section{Proofs}
\label{proofs}

Some of the proofs in this Section will rely on Hoppe's urn model, which enables us to generate the $n$-coalescent forwards in time.
This urn appeared first in \citep{Hoppe1984} and was used to show Ewen's sampling formula \citep{Ewens1972},
which gives the probability to sample a certain allele composition.
Here we will use Hoppe's urn to mimic the genealogy given by a Kingman coalescent hit by gene loss events, rather than allele mutations,
and compute the expected gene and site frequencies.

\subsection{Hoppe's urn}

Let us consider the history of one single gene $u$ along Kingman's coalescent, backwards in time, i.e. from the leaves to the root.
If we go backwards in time, we can not know at which time and at which lineage the gene $u$ has been gained, if at all.
Thus we consider all \emph{potential} gene loss events for the gene $u$ backwards in 
time, which happens at rate $\tfrac{\rho}{2}$ along the tree.
Only afterwards we will determine whether and if so, where the gene $u$ has been gained.
Now each of the potential loss events either ends up as an effectless event or alters the presence of gene $u$ at some ancestral lineage.
Assume a lineage is hit by a potential gene loss event. In this case if the gene is gained in the further past of this event, 
it will surely be absent in any of the descendants of this lineage.
So we only have to consider the gene gain events of gene $u$ along the unlost lineages
and we can discontinue to follow a lineage backwards in time when a potential gene loss event occurs.
At each of the potential gene loss events a random tree is rooted, such that
we obtain a forest of $K$ smaller trees, instead of one single tree.
The following definition describes the resulting forest.

\begin{definition}[Lineage loosing Kingman's coalescent]
\label{def:markedKingman}
  The lineage loosing Kingman coalescent $(R_t^{{\bf 0}})_{t \geq 0}$ is a continuous time Markov process with state space $\Pi_n^{{\bf 0}}$, 
the set of all marked partitions of $\{1,\dots,n\}$. The set of marked partitions is an extension of the set of partitions, where
 any partition element may or may not contain the additional element ${\bf 0}$.
A partition element $\xi_i$ of a partition $\xi = \{\xi_1,\dots,\xi_K\}$ is either a killed/lost element, if ${\bf 0} \in \xi_i$ or an active/unlost partition element
 if ${\bf 0} \notin \xi_i$. 
The infinitesimal generator $Q = (q_{\xi\eta})_{\xi,\eta \in \Pi_n}$ of the process is given by:\\
\begin{equation}
  q_{\xi\eta} = \begin{cases}
		  - \frac{k(k-1)+k\rho}{2} & \text{if } \xi = \eta\\
		  1 & \text{if } \xi \prec \eta\\ 
		  \frac{\rho}{2} & \text{if } \xi <_{{\bf 0}} \eta\\ 
		  0 & otherwise
                \end{cases}
\end{equation}
for $k := |\xi|_{{\bf \neg 0}}$, the number of partition elements $\xi_i$ in $\xi$, where ${\bf 0} \notin \xi_i$. We denote $\xi \prec \eta$ iff $\eta$ is obtained from 
$\xi$ by combining two partition elements without ${\bf 0}$ of $\xi$. We denote $ \xi <_{{\bf 0}} \eta $
iff $\eta$ is obtained by adding $\bf 0$ to one of the partition elements in $\xi$. 
The initial state $R_0 = \{ \{1\},\dots,\{n\} \}$ is the partition, where each $i \in \{1,\dots,n\}$ is its own partition element and does not contain $\bf 0$. 
\end{definition}

If we ignore the lengths the process $ (R_t^{{\bf 0}})_{t \geq 0} $ stays it is possible 
to construct the same forest forwards in time by the following P\'olya like urn model \citep{Hoppe1984}.

\begin{definition}[Hoppe's urn]
\label{def:hoppeurn}
Start with an urn with $i$ balls in $i$ different colors and one black ball.
Each colored ball has mass $1$ and the black ball has mass $\rho$.
Draw a new ball from the urn until there are $n$ colored balls within the urn.
\begin{itemize}
 \item If a colored ball is drawn, put the ball back into the urn together with an additional identical ball of the same color as the drawn ball.
 \item If the black ball is drawn, put the black ball back into the urn and add an additional ball of a new color to the urn.
\end{itemize}
Let $K \geq i$ be the random number of different colors among the final $n$ colored balls.
Denote by $\Pi_n^i = (n_1,\dots,n_K)$ the numbers of balls in the urn with the same color, when there are $n$ colored balls in total in the urn,
i.\,e.\ $n_j$ is the number of balls in color $j$ and there are $K$ different colors present in the urn.
We will assign the colors such that $n_1 \leq n_2 \cdots \leq n_K$.
\end{definition}

\begin{theorem}[Hoppe's urn describes the family size composition of Kingman's coalescent]
 \label{th:familysizehoppekingman}
 Let $\Pi_n^i$ be as in Definition \ref{def:hoppeurn} and 
 let $(R_j^{\bf 0})_{j=n,\dots,0}$ be the embedded Markov chain of $(R_t^{{\bf 0}})_{t \geq 0}$ from Definition \ref{def:markedKingman}.
 i.e. $R_j^{\bf 0}$ is the state of $(R_t^{{\bf 0}})_{t \geq 0}$, where $|R_t^{{\bf 0}}|_{{\bf \neg 0}} = j$.
 Further let $f$ be the function mapping the marked partition $R_j^{\bf 0} = \{\xi_1,\dots,\xi_K\}$ to the 
 family sizes $(n_1,\dots,n_K)$ such that  $n_l = |\xi_l \setminus \{{\bf 0}\} |$ and $n_1 \leq n_2 \leq \dots \leq n_K$.
 Then for any $a = (n_1,\dots,n_k)$
\begin{equation*}
 \mathbb P(\Pi_n^i = a) = \mathbb P(f(R_i^{\bf 0}) = a)
\end{equation*}
\end{theorem}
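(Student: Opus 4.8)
The plan is to show that Hoppe's urn, run forwards in time, produces the same family-size distribution as the lineage-loosing Kingman coalescent, run backwards from the leaves. The natural strategy is a coupling/induction on the number of \emph{active} (unlost) partition elements, exploiting the fact that in both processes one only needs to track, at each transition, which of the current active blocks is affected and whether the event is a coalescence (merging two active blocks) or a loss (freezing one active block). First I would pass from the continuous-time process $(R_t^{\bf 0})$ to its embedded jump chain $(R_j^{\bf 0})_{j=n,\dots,1}$, as already set up in the statement, so that only the sequence of transition \emph{types} matters, not the holding times. When there are $j$ active blocks, the generator in Definition~\ref{def:markedKingman} gives total active-event rate $\tfrac{j(j-1)}{2}+\tfrac{j\rho}{2}$, split into a coalescence of each specific pair at rate $1$ (total $\binom{j}{2}$) and a loss of each specific block at rate $\tfrac{\rho}{2}$ (total $\tfrac{j\rho}{2}$). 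Hence, conditionally on the current active blocks, the next transition is ``merge this particular pair'' with probability $\tfrac{2}{j(j-1)+j\rho}$ each, and ``freeze this particular block'' with probability $\tfrac{\rho}{j(j-1)+j\rho}$ each.

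Next I would set up the time-reversal. Reading $(R_j^{\bf 0})$ backwards --- i.e.\ from $j=1$ active block up to $j=n$ --- a coalescence becomes a \emph{split} of an active block into two active blocks, and a freeze-event, read backwards, becomes the \emph{birth} of a brand-new active block of size one. Now I would match this to Hoppe's urn: starting (in the backward direction) from the configuration with $i$ colored balls, drawing a colored ball of color $c$ corresponds to attaching a new lineage to the block of color $c$ --- i.e.\ to a split in which one of the two offspring blocks is a singleton leaf --- while drawing the black ball corresponds to introducing a new color, i.e.\ the birth of a new active block. The key combinatorial fact to verify is that these two descriptions induce the \emph{same} transition probabilities on unordered family-size vectors. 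In the urn with colored balls of masses $n_1,\dots,n_k$ (total mass $\sum n_l$) and black mass $\rho$, color $c$ is hit with probability $\tfrac{n_c}{\sum n_l + \rho}$ and black with probability $\tfrac{\rho}{\sum n_l + \rho}$; one must check this equals the backward transition probability obtained by reversing the Kingman-with-loss chain at the appropriate level, for which the standard sampling-consistency/exchangeability argument (as in the derivation of Ewens' sampling formula from Hoppe's urn) applies, with the loss rate $\tfrac{\rho}{2}$ playing the role of the mutation parameter $\theta/2$. I would make this precise by an induction on $j$: assuming the two processes agree in law on family-size vectors at level $j$, the displayed transition probabilities agree, hence they agree at level $j+1$; the base case $j=i$ is the common deterministic start $(1,\dots,1)$.

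The main obstacle I expect is bookkeeping the \emph{labels}: Hoppe's urn is naturally a process on colored/exchangeable compositions, whereas the coalescent lives on genuine set-partitions of $\{1,\dots,n\}$, and the map $f$ only records unordered block sizes (discarding both the labels and the ${\bf 0}$-marks). So one must argue that projecting both processes through $f$ is legitimate --- i.e.\ that the family-size vector is itself Markov under each dynamics, which holds because in both cases the transition probabilities depend on the current state only through the multiset of active block sizes. Care is needed with two points: first, that reversing the embedded chain is valid even though the full chain is not reversible (one only needs the reversal of the finite trajectory from $n$ down to the first state with one active block, and at each level the backward kernel is computed from the forward kernel and the level-$j$ marginal, exactly as in the Ewens/Hoppe proof); and second, the convention about the root lineage --- the ``infinite lineage added at the root'' in Definition~\ref{def:treeindexedMC} --- which corresponds to the fact that in Hoppe's urn we stop once there are $n$ colored balls rather than forcing everything back to a single block. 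Once the per-level transition probabilities are shown to coincide, the conclusion $\mathbb P(\Pi_n^i = a) = \mathbb P(f(R_i^{\bf 0}) = a)$ follows by the induction, and no further computation is needed.
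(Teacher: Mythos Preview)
Your time-reversal strategy is the classical route for connecting Hoppe's urn with the coalescent, but the specific induction you set up does not work as written, and the paper takes a different, more direct path.

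The main gap is your base case. You write that at $j=i$ both processes start deterministically at $(1,\dots,1)$. This is true for Hoppe's urn, which begins with $i$ singleton colors. But the coalescent read in the direction $j=i \to j=n$ is \emph{not} at $(1,\dots,1)$ when $j=i$: the state $R_i^{\bf 0}$ is precisely the random partition whose law you are trying to identify. It is the coalescent at level $j=n$ that equals $(1,\dots,1)$ deterministically. More structurally, the family-size vector of the urn at level $j$ sums to $j$, whereas $f(R_j^{\bf 0})$ always sums to $n$; so a level-by-level identification of the two chains on family-size vectors cannot hold at intermediate $j$, and no induction of the form ``agree at $j$ $\Rightarrow$ agree at $j+1$'' can be run. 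Your identification of a colored draw with ``a split in which one of the two offspring blocks is a singleton leaf'' also fails for the reversed coalescent: undoing a coalescence splits a block into the two blocks that had merged, and neither need be a singleton.

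The paper avoids all of this by not reversing the chain. It observes that in both processes the sequence of event types $(e_j)_{j=i+1}^n \in \{\texttt{coal},\texttt{loss}\}^{n-i}$ consists of independent coordinates with $\mathbb P(e_j=\texttt{coal}) = (j-1)/(j-1+\rho)$ --- this being the probability that the next draw is colored in the urn with $j-1$ colored balls, and equally the probability $\binom{j}{2}/(\binom{j}{2}+j\rho/2)$ that a coalescence precedes a loss when $j$ lineages are active. Hence the two event-type sequences have the same law. Conditionally on the event types, the remaining randomness is in each case a uniform choice (a uniform pair or uniform block in the coalescent; a uniform colored ball in the urn), and the paper concludes by exchangeability that the resulting family sizes agree. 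No backward kernel or level-$j$ marginal needs to be computed.
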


\begin{proof}
The embedded Markov chain $(R_j^{\bf 0})_{j=n,\dots,i}$ can be described by $n-i$ events $e_n,\dots,e_{i+1} \in \{\texttt{coal}, \texttt{loss} \}$.
The step from $j$ to $j-1$ will either merge two unlost partition elements of $R_j^{\bf 0}$ or loose an unlost partition element, by adding $\bf 0$ to it.
In any case the number of unlost lineages will be $|R_{j-1}^{{\bf 0}}|_{{\bf \neg 0}} = j-1$ at the next step.
Let the event from $R_j^{\bf 0}$ to $R_{j-1}^{\bf 0}$, denoted by $e_j$, be $e_j = \texttt{coal}$ if one of the $\binom{j}{2}$ pairs of unlost partition elements merges
and $e_j = \texttt{loss}$ if one of the $j$ unlost partition elements is marked by $\bf 0$.
We can then easily write down the probability 
\begin{equation} \frac{\prod\limits_{k=i+1}^n  \mathds{1}_{e_k = \texttt{coal}}  \frac{k(k-1)}{2} +     \mathds{1}_{e_k = \texttt{loss}} k \frac{\rho}{2}   } 
{\prod\limits_{k=i+1}^n \frac{k(k-1)}{2} + k \frac{\rho}{2}  }  
= 
 \frac{\rho^K \prod\limits_{k=i+1}^n  \mathds{1}_{e_k = \texttt{coal}} (k-1) } 
{\prod\limits_{k=i+1}^n (k-1 + \rho)  }  \label{pathprob}
\end{equation}
for a given sequence of events $e_n,\dots,e_{i+1}$, with $K$ loss events, to occur.

The process $(R_j^{\bf 0})_{j=n,\dots,0}$ runs backwards in time. We will now turn to the urn model, where
we can generate the same sequence of events $e_{i+1},\dots,e_n$ forwards in time.
If there are $j-1$ colored balls in the urn, the next chosen ball is colored with probability $\tfrac{j-1}{j-1+\rho}$, which equals
$$\frac{j-1}{j-1+\rho} = \frac{\binom{j}{2}}{\binom{j}{2} + j \frac{\rho}{2} }, $$
the probability that two out of $j$ lineages coalesce before a lineage is killed off the tree by gene loss.
Let us denote the sequence of draws in Hoppe's urn by $\tilde e_{i+1},\dots,\tilde e_{n}$.
We write $\tilde e_j = \texttt{coal}$, if one of the $j$ colored balls is chosen and $\tilde e_j = \texttt{loss}$, if the black ball is chosen.
As $(\tilde e_j)_{i+1\leq j\leq n}$ has the same distribution as $(e_j)_{i+1\leq j\leq n}$,
the probability for a sequence $e_{i+1},\dots,e_n$ with $K$ loss events to occur is thus as well given by \eqref{pathprob}.

Finally note that each colored ball in the urn is equally likely chosen.
Just as each pair of lineages in the lineage loosing coalescent coalesces equally likely and each lineage gets lost with the same probability.
Thus the family sizes of the $i$ unlost lineages in $R_i^{\bf 0}$ 
are distributed like the final numbers of balls in each of the $i$ colors,
Hoppe's urn started with.
\end{proof}

\begin{remark}
 While Hoppe's urn usually starts with one black ball, we allowed the urn to start with $i$ colored and one black ball.
\end{remark}

In order to capture the mutation dynamics within the urn model, we have to adapt Hoppe's urn to our model by adding marks and dots to the balls in the urn.

\begin{definition}
 Let $T_i$ be the length of the random time, where the lineage loosing coalescent has $i$ unlost lineages, i.e. 
 $T_i := \lambda(\{ t \in (0,\infty) : |R_t^{{\bf 0}}|_{{\bf \neg 0}} = i \})$.
\end{definition}

By construction, the random variables $(T_i)_{ i = 1,\dots,n}$ are independent exponentially distributed waiting times with mean $\mathbb E[T_i] = \frac{2}{i(i-1)+i\rho}$.

In the next definition we will distinguish between marked and unmarked balls.
In addition to a mark each ball can carry any number of colored dots. 
Please note, a mark does here not mean that a lineage is lost, as all balls in the urn correspond to unlost lineages.
Instead a mark will correspond to a gene gain event, while dots correspond to site mutations.

\begin{remark}
Note that, in the lineage loosing coalescent, only the history of one possible gene $u$ is described.
However, new genes are picked from $I$ according to the Lebesgue measure $\lambda$.
Such that a gene $u \in I$ is almost surely never gained.
To compute first moment statistics within Hoppe's urn, we will use an abused notation.
We will add a gene gain of a gene in an arbitrary small interval $du  \subset I, |du| \searrow 0$ to the description of Hoppe's urn such that a gene gain event in $du$ occurs at rate 
$\frac{\theta_1}{2}du$. Since at any time there are only finitely many genes present in the ancestral lineages of the sample and $du$ is small,
there will be at most one gene gain in $du$ along the lineages.
On the over hand, we will consider all possible site mutations $(u,v) \in \{u\} \times J$ at once,
such that the site mutation rate is $\frac{\theta_2}{2}$, once the gene $u$ in $du$ has been gained.
\end{remark}

\begin{definition}[Hoppe's urn with colored dots]
 \label{def:urnwithdots}
If there are $i$ colored balls in Hoppe's urn wait an exponential time $T_i$ with mean $\frac{i(i-1)}{2}+i\frac{\rho}{2}$ until drawing the next ball.
During this waiting time mark each of the colored balls at rate $\frac{\theta_1}{2}du$. 
As soon as a ball is marked place a dot in a new color $v$ at rate $\frac{\theta_2}{2}$ to this ball.
If after $T_i$:
\begin{itemize}
 \item a colored ball is drawn, put the ball back into the urn and add an additional ball with identical color, mark and dots as the drawn ball.
 \item a black ball is drawn, put the ball back into the urn and add a ball of a new color, without mark or any dots, to the urn.
\end{itemize}
Finally, if there are $n$ colored balls in the urn wait an exponential waiting time $T_n$ with mean $\frac{n(n-1)}{2}+n\frac{\rho}{2}$, place the corresponding marks and dots then stop.
\end{definition}

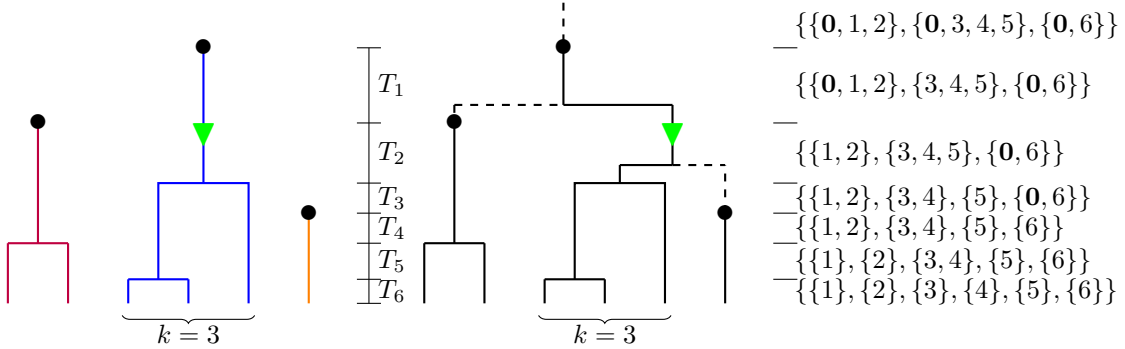
\begin{figure}
\begin{tikzpicture}[scale = 0.8]
 \draw[color = purple,thick] (2,1)--(2,0);
\draw[color = purple,thick] (1,1)--(1,0);
 \draw[color = purple,thick] (2,1)--(1,1);
\draw[color = blue,thick] (3,0.4)--(3,0);
\draw[color = blue,thick] (4,0.4)--(4,0);
 \draw[color = blue,thick] (3,0.4)--(4,0.4);
\draw[color = blue,thick] (3.5,0.4) -- (3.5,2) -- (5,2) -- (5,0);


\draw[color = orange,thick] (6,1.5) -- (6,0);
\draw[font = \Large] (6,1.5) node (hoppeloss1) {$\bullet$};

\draw[color = purple,thick] (1.5,1) -- (1.5,3);
\draw[font = \Large] (1.5,3) node (hoppeloss2) {$\bullet$};
\draw[color = blue,thick] (4.25,2) -- (4.25,4.25);
\draw[font = \Large] (4.25,4.25) node (hoppeloss3) {$\bullet$};

\draw[font = \Large, color = green] (4.25,2.8) node (hoppemark1) {$\blacktriangledown$};

\draw[decorate,decoration={brace,amplitude=3pt,mirror}] 
    (2.9,-0.2) -- (5.1,-0.2); 

\draw node at (4,-0.5){$k = 3$};

\draw (7,0) -- (7,4.25);
\draw (6.8,0) -- (7.2,0);
\draw[right] node at (7,0.2){$T_6$};
\draw (6.8,0.4) -- (7.2,0.4);
\draw[right] node at (7,0.7){$T_5$};
\draw (6.8,1) -- (7.2,1);
\draw[right] node at (7,1.25){$T_4$};
\draw (6.8,1.5) -- (7.2,1.5);
\draw[right] node at (7,1.75){$T_3$};
\draw (6.8,2) -- (7.2,2);
\draw[right] node at (7,2.5){$T_2$};
\draw (6.8,3) -- (7.2,3);
\draw[right] node at (7,3.625){$T_1$};
\draw (6.8,4.25) -- (7.2,4.25);
 \end{tikzpicture}
\begin{tikzpicture}[scale = 0.8]
 \draw[thick] (2,1)--(2,0);
\draw[thick] (1,1)--(1,0);
 \draw[thick] (2,1)--(1,1);
\draw[thick] (3,0.4)--(3,0);
\draw[thick] (4,0.4)--(4,0);
 \draw[thick] (3,0.4)--(4,0.4);
\draw[thick] (3.5,0.4) -- (3.5,2) -- (5,2) -- (5,0);


\draw[thick] (6,1.5) -- (6,0);
\draw[thick,dashed] (6,1.5) -- (6,2.3);
\draw[thick] (5.125,2.3) -- (4.25,2.3);
\draw[thick,dashed] (5.125,2.3) -- (6,2.3);
\draw[thick] (5.125,2.3) -- (5.125,3.3);
\draw[thick] (3.3125,3.3) -- (3.3125,4.25);
\draw[thick,dashed] (3.3125,5) -- (3.3125,4.25);

\draw[font = \Large] (6,1.5) node (hoppeloss1) {$\bullet$};

\draw[thick] (1.5,1) -- (1.5,3);
\draw[thick,dashed] (1.5,3.3) -- (1.5,3);
\draw[thick,dashed] (1.5,3.3) -- (3.3125,3.3);
\draw[thick] (3.3125,3.3) -- (5.125,3.3);
\draw[font = \Large] (1.5,3) node (hoppeloss2) {$\bullet$};
\draw[thick] (4.25,2) -- (4.25,2.3);
\draw[font = \Large] (3.3125,4.25) node (hoppeloss3) {$\bullet$};

\draw[font = \Large, color = green] (5.125,2.8) node (hoppemark1) {$\blacktriangledown$};

\draw[decorate,decoration={brace,amplitude=3pt,mirror}] 
    (2.9,-0.2) -- (5.1,-0.2); 

\draw node at (4,-0.5){$k = 3$};

\draw[right] node at (7,0.2){$\{\{1\},\{2\},\{3\},\{4\},\{5\},\{6\}\}$};
\draw (6.8,0.4) -- (7.2,0.4);
\draw[right] node at (7,0.7){$\{\{1\},\{2\},\{3,4\},\{5\},\{6\}\}$};
\draw (6.8,1) -- (7.2,1);
\draw[right] node at (7,1.25){$\{\{1,2\},\{3,4\},\{5\},\{6\}\}$};
\draw (6.8,1.5) -- (7.2,1.5);
\draw[right] node at (7,1.75){$\{\{1,2\},\{3,4\},\{5\},\{{\bf 0},6\}\}$};
\draw (6.8,2) -- (7.2,2);
\draw[right] node at (7,2.5){$\{\{1,2\},\{3,4,5\},\{{\bf 0},6\}\}$};
\draw (6.8,3) -- (7.2,3);
\draw[right] node at (7,3.625){$\{\{{\bf 0},1,2\},\{3,4,5\},\{{\bf 0},6\}\}$};
\draw[right] node at (7,4.625){$\{\{{\bf 0},1,2\},\{{\bf 0},3,4,5\},\{{\bf 0},6\}\}$};
\draw (6.8,4.25) -- (7.2,4.25);
 \end{tikzpicture}
\caption{A realization of Hoppe's urn is shown. During $T_2$ the first line is of size $k=2$ and the second line is of size $3$. 
For $i=k=2$ and $m=1$, $\mathcal T(2,2,1)$ as given above \eqref{ElT} equals
 $\{2,3,4\}$, while $\mathcal T(2,2,2) = \{5,6\}$. The corresponding lineage loosing Kingman's coalescent is shown on the left side. 
 Lost lineages are shown as dashed lines, in addition we show the would-be merging points if the lineages would not have been lost.  \label{hoppebild} }
\end{figure}

Denote by $B_i$ the $i$-th colored ball added to the urn.
We may just as in Definition \ref{def:treeindexedMC} represent the final mark and dots on ball $B_i$ by 
a simple finite counting measure $D_{du}(B_i) \in \mathcal N_f(\{u\} \times ( \{0\} \cup J))$, for $J= (0,1]$.
If $B_i$ is a marked set $D_{du}(B_i)(u,0) = 1$, otherwise $D_{du}(B_i)(u,0) = 0$.
In addition if a dot in a new color $v$ is placed to $B_i$, choose $v$ according to the Lebesgue measure from $J = (0,1]$.

\begin{theorem}
\label{th:whyhoppeworks}
Consider $\mathcal M_i \in \mathcal N_f(I \times J)$ as given in Definition \ref{def:treeindexedMC} for $i=1,\dots,n$.
Let $f: \mathcal N_f(I \times J)^n \rightarrow \mathbb R$ be a summary statistic of the sample $(\mathcal M_i)_{i = 1,\dots,n}$. 
Assume further that $f$ does only involve properties of single genes. That is, for all $M \in \mathbb N$:
 \begin{equation*}
  f((\mathcal M_i)_{i = 1,\dots,n}) = \sum_{m=0}^{M-1} f( (\mathcal M_i|_{\{(\frac{m}{M},\frac{m+1}{M}]\}\times J})_{i = 1,\dots,n}  ).
 \end{equation*}
 or in a more convenient, though abused notation
\begin{equation*}
  f((\mathcal M_i)_{i = 1,\dots,n}) = \int_I f( (\mathcal M_i|_{du\times J})_{i = 1,\dots,n}  )
 \end{equation*}
holds.
Moreover, let $D(B_i)$ be the finite counting measures described above, given by Hoppe's urn started with one black ball, then
\begin{equation}
 \mathbb E[ f((\mathcal M_i)_{i = 1,\dots,n}) ] = \int_I \mathbb E[ f( (D_{du}(B_i))_{i=1,\dots,n} ) ].
\end{equation}
\end{theorem}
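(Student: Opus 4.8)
The plan is to combine linearity of expectation over the gene pool with a distributional identification, carried out one gene at a time, between the tree-indexed chain of Definition~\ref{def:treeindexedMC} and Hoppe's urn with colored dots of Definition~\ref{def:urnwithdots}. First I would use the assumed single-gene decomposition of $f$ in the limit $M\to\infty$: since almost surely only finitely many genes are present in the ancestral lines of the sample, Fubini together with dominated convergence lets me interchange the expectation with the (limiting) sum, so that
\begin{equation*}
\mathbb E\big[f\big((\mathcal M_i)_{i=1,\dots,n}\big)\big] = \int_I \mathbb E\big[ f\big((\mathcal M_i|_{du\times J})_{i=1,\dots,n}\big)\big].
\end{equation*}
Because the urn side of the claim is already written per $du$, it then suffices to show, for each infinitesimal $du$, that the $n$-tuple $(\mathcal M_i|_{du\times J})_{i=1,\dots,n}$ and the $n$-tuple $(D_{du}(B_i))_{i=1,\dots,n}$ have the same expectation of $f$. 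Both sequences are exchangeable---the first because the leaf labelling of Kingman's coalescent is, the second because Hoppe's urn produces an exchangeable sequence---so it is enough to show that they induce the \emph{same law on unordered configurations}, i.e.\ the same joint law of the random partition of the individuals carrying the gene in $du$ together with the attached site-mutation pattern. Indeed, for exchangeable $X,Y$ with equal unordered law one has $\mathbb E[f(X_1,\dots,X_n)] = \mathbb E[\tfrac1{n!}\sum_\sigma f(X_{\sigma(1)},\dots,X_{\sigma(n)})] = \mathbb E[\tfrac1{n!}\sum_\sigma f(Y_{\sigma(1)},\dots,Y_{\sigma(n)})] = \mathbb E[f(Y_1,\dots,Y_n)]$, since the symmetrized $f$ is a function of the unordered configuration only.

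For the distributional identification I would run the lineage loosing Kingman coalescent of Definition~\ref{def:markedKingman} backward in time and read it forward through Hoppe's urn, extending the coupling already used in the proof of Theorem~\ref{th:familysizehoppekingman}: a backward coalescence among $j$ unlost lineages corresponds to a duplication step when the urn holds $j$ colored balls, and a potential gene loss event corresponds to drawing the black ball and founding a new color, so that the forest of subtrees obtained by cutting the coalescent at the loss events is in bijection with the colors of the urn---a subtree with $m$ leaves matching a color with $m$ balls---and the sojourn times match the $T_j$. On top of this I superimpose the two decorations. A gain of the gene in $du$ occurs along the lineages at rate $\tfrac{\theta_1}{2}\,du$; conditionally on the forest it hits a uniform point $p$, and for $du$ small at most once, which matches the rate $\tfrac{\theta_1}{2}\,du$ at which each colored ball is marked in Definition~\ref{def:urnwithdots}. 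Forward in time the gene is present exactly on the leaves descending from $p$ without crossing a further loss event, which is precisely the set of balls descending through duplications from the marked ball, while new loss events below $p$ found fresh unmarked, undotted colors and so correctly remove the corresponding leaves from the gene's presence set. Finally, from $p$ onward the gene accumulates site mutations at rate $\tfrac{\theta_2}{2}$ along every lineage that still carries it; under the coupling this is exactly the rule of Definition~\ref{def:urnwithdots} that places a new dot at rate $\tfrac{\theta_2}{2}$ on the marked ball and copies marks and dots at every duplication. Hence the unordered single-gene configuration of the coalescent and that of the urn have the same law, and feeding this back into the displayed identity---noting that the $o(du)$ contributions from two or more gains in $du$ vanish in the integral---gives
\begin{equation*}
\mathbb E\big[f\big((\mathcal M_i)_{i=1,\dots,n}\big)\big] = \int_I \mathbb E\big[ f\big((D_{du}(B_i))_{i=1,\dots,n}\big)\big],
\end{equation*}
which is the claim.

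The hard part will be the second step: making the bijection between the decorated backward coalescent (gene gain, potential gene loss producing the forest, and site mutation) and the decorated urn fully rigorous. In particular I expect the bookkeeping of which balls inherit a given mark together with its dots, the infinitesimal $du\to0$ argument for the gene gain event, and the passage to unordered configurations via exchangeability to be the delicate points, rather than any single computation; everything downstream (Theorem~\ref{jointfs} and its corollaries) is then a matter of evaluating expectations inside the urn.
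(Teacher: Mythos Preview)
Your proposal is correct and follows essentially the same route as the paper: reduce via the single-gene decomposition and linearity of expectation to a per-$du$ comparison, then identify the decorated tree-indexed chain restricted to $du\times J$ with Hoppe's urn with colored dots by superimposing gene gain (the mark, at rate $\tfrac{\theta_1}{2}\,du$) and site mutation (the dots, at rate $\tfrac{\theta_2}{2}$) onto the forest coupling of Theorem~\ref{th:familysizehoppekingman}. The paper's proof is in fact terser than yours---it simply asserts $\mathcal M_i|_{du\times J}\stackrel{d}{=}D_{du}(B_i)$ from the coupling and the smallness of $du$---whereas your explicit passage through exchangeability and unordered configurations is a welcome clarification of why matching family sizes (rather than labelled partitions) suffices.
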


\begin{proof}
In the tree indexed Markov chain from Definition \ref{def:treeindexedMC} site mutations within dispensable genes can only occur if the gene already exists.
Thus, before we start adding site mutations in $du \times J$ we have to wait until a gene $u \in du$ is gained.
This corresponds to placing a mark at rate $\tfrac{\theta_1}{2} du$ to a ball during the evolution of the urn.
As we are waiting an exponential time $T_i$ with mean $\tfrac{2}{i(i-1)+i\rho}$ until drawing the next ball this gives us the correct branch lengths within the forest
and since $du$ is small there is at most one mark until the urn is stopped.
Furthermore Theorem \ref{th:familysizehoppekingman} ensures that the family sizes of each ball/lineage are equally distributed.

Thus placing colored dots at rate $\tfrac{\theta_2}{2}$ at each marked ball gives the same distribution, as the site mutations within the gene $u$, which 
occur at rate $\tfrac{\theta_2}{2}$ for each lineage, which carries the gained gene $u$ in $du$. So we get 
\begin{equation*}
 \mathcal M_i|_{du \times J} \stackrel{d}{=} D_{du}(B_i),
\end{equation*}
and finally
\begin{align*}
 \mathbb E[ f((\mathcal M_i)_{i = 1,\dots,n}) ] =  \int_I \mathbb E[ f( (\mathcal M_i|_{du\times J})_{i = 1,\dots,n} ) ] = \int_I \mathbb E[ f( (D_{du}(B_i))_{i=1,\dots,n} ) ]. \qedhere
\end{align*}
\end{proof}

\begin{remark}
In particular, Theorem \ref{th:whyhoppeworks} holds for the joint site and gene frequency spectrum 
$$f((\mathcal M_i)_{i = 1,\dots,n}) := G_{k,s}$$
 from Definition \ref{def:jointfs}.
We will use this relationship between Hoppe's urn and the tree indexed Markov chain to investigate $\mathbb E[G_{k,s}]$ in the proof of Theorem \ref{jointfs}.
Note that, Theorem \ref{th:whyhoppeworks} does not hold for higher moments and statistics involving pairs of genes, as Hoppe's urn is unable to mimic the 
joint genealogy of two genes.
\end{remark}

\subsection{Proof of Theorem \ref{jointfs}}

\begin{proof}[Proof of Theorem \ref{jointfs}]

To show Theorem \ref{jointfs} we will take the same route as in the proof of Theorem~5 in \cite{BaumdickerHessPfaffelhuber2010}.\\
Based on Hoppe's urn with colored dots we will compute

\begin{align}
\begin{split}
\label{ausgang}
\mathbb E[G_{k,s}] = \int_I \int_J \mathbb E\big[ (d&u,0) \in \mathcal M_i \text{ for exactly $k$ different $i$}  \\
                       &\text{and } (du,dv) \in \mathcal M_i \text{ for exactly $s$ different $i$}\big]
\end{split}
\end{align}


In Hoppe's urn we say that line $l$ during $T_i$ is of size $k$ if the ball belonging to this line
produces exactly $k-1$ offspring until the urn finishes. Then we can express \eqref{ausgang} as

\begin{align*}
\mathbb E[G_{k,s}] =
 & \sum_{i=1}^n \sum_{l=1}^i \mathbb P[\text{$l$th line during $T_i$ is of size $k$}] \cdot \int_I \mathbb P[\text{mark in $du$ on $l$th line during $T_i$}]\\
 & \quad  \cdot \int_J \mathbb E\left[ (du \times dv) \in \mathcal M_{i_j} \text{ for exactly } s \text{ different } i_j \Big | 
	    \begin{array}{l} 
	    \text{$l$th line during $T_i$ is of size $k$}\\ \text{and mark in $du$ on $l$th line during $T_i$}\\
	    \end{array}
 \right].
\end{align*}

The first two terms in the sum are already known from the proof of Theorem~\ref{genefreqspec}. In \cite{BaumdickerHessPfaffelhuber2010} we showed that
\begin{equation*}
 \mathbb P[\text{mark in $du$ on $l$th line during $T_i$}] = \frac{\theta_1}{i(i-1+\rho)}du
\end{equation*}
as $T_i$ is exponential distributed with mean $\tfrac{2}{j(j-1+\rho)}$ and gene gains in $du$ occur at rate $\theta_1 du$.
Using Hoppe's urn it is possible to show that
\begin{equation}
\label{oldhoppe}
\mathbb P[\text{$l$th line during $T_i$ is of size $k$}] = \binom{n-i}{k-1} \frac{(k-1)!(i-1+\rho) \cdots (n-k-1+\rho)}{(i+\rho) \cdots (n-1+\rho) }.
\end{equation}


Note that we will use the notation $(a)\cdots(a+b-1)$ for the Pochhammer function $(a)_b = (a)(a+1)(a+2)\cdots(a+b-1)$, and if $b<1$, then $(a)\cdots(a+b-1) = 1$.

So we are left with the last factor 
\begin{align}
 \mathbb E\left[ (du \times dv) \in \mathcal M_{i_j} \text{ for exactly } s \text{ different } i_j \Big | 
	    \begin{array}{l}
	    \text{$l$-th line during $T_i$ is of size $k$}\\ \text{and mark in $du$ on $l$-th line during $T_i$}\\
	    \end{array}
 \right]
\end{align}
which again remains a combinatorical problem in Hoppe's urn, if we add the possibility to place dots at the balls of the urn as given in Definition \ref{def:urnwithdots}.
A ball within this urn which is marked corresponds to a gene gain event, while placing dots at marked balls corresponds to single point mutations.

We denote by $T_i$ the random waiting time where there are $i$ colored balls present in Hoppe's urn. 
This corresponds in Kingman's coalescent to the time $i$ lineages need to coalesce into $i-1$ lineages or 
loose one of the $i$ lines at rate $\tfrac{\rho}{2}$.
Suppose during $T_i$ one of the balls is marked,
then the time after this event is again distributed like $T_i$ due to the memoryloss of the exponential distribution.
Now let us consider only realizations of Hoppe's urn, 
where there are $k$ marked balls among the final $n$ balls.
Let $\mathcal T(i,k,m)$ be the set of all  $j \in \{i,\dots,n\}$ where $m$ of $j$ colored balls in the urn are marked and
let $l(\mathcal T) := \sum_{j \in \mathcal T} T_j$. See also Figure \ref{hoppebild} for an illustration.
Then
\begin{align}\label{ElT}
 \mathbb E [ l(\mathcal T(i,k,m)) ] = \sum_{j=i}^n \mathbb P[ T_j \in \mathcal T(i,k,m) ] \mathbb E[ T_j ].
\end{align}

Starting with $i-1$ unmarked balls and one marked ball there are 
 $\binom{n-i}{k-1}$ possibilities at what times $k-1$
  marked balls are added when $n-i$ balls are added to the urn in
  total. And the probability for any of these possibilities equals in Hoppe's urn.
 To get \eqref{ElT} one has to consider how many of these possibilities will have $m$ marked balls at time $j$.
 Clearly this is only the case if $m-1$ marked balls are added when adding the first $j-i$ balls and the remaining $k-m$
 marked balls are added within the remaining $n-j$ balls.
Thus 
\begin{align*}
 \mathbb P(j \in \mathcal T(i,k,m) ) = \binom{j-i}{m-1} \cdot \binom{n-j}{k-m} \cdot \binom{n-i}{k-1}^{-1}
\end{align*}
and
\begin{align*}
 \mathbb E [ l(\mathcal T(i,k,m)) ] = \sum_{j=i}^n \binom{j-i}{m-1} \cdot \binom{n-j}{k-m} \cdot \binom{n-i}{k-1}^{-1}  ( \tfrac j2(j-1+\rho) )^{-1}.
\end{align*}

In the subtree only mutations/dots gained at a line of size $s$ result in SNPs in frequency $s$, see Figure \ref{subtree}.
Note that the subtree does not contain any loss events.
If there are $m$ marked balls, we still have to add $k-m$ marked balls and if $s-1$ of these balls have to be the offspring of one ball, then 
there are $\binom{k-m}{s-1}$ possibilities to do so. Each of these possibilites has probability $\frac{(s-1)!(m-1)\cdots(k-s-1)}{(m)\cdots(k-1)}$.
Thus, for each of the $m$ marked balls during any time with $j \in \mathcal T(i,k,m)$ colored balls the probability to have $s-1$ offspring is given by 
\begin{align}
\label{newhoppe}
  \binom{k-m}{s-1} \frac{(s-1)!(m-1)\cdots(k-s-1)}{(m)\cdots(k-1)}.
\end{align}
In contrast to equation \eqref{oldhoppe} the above formula does not contain any $\rho$, as we are currently acting only in the marked subtree of the forest produced by Hoppe's urn.

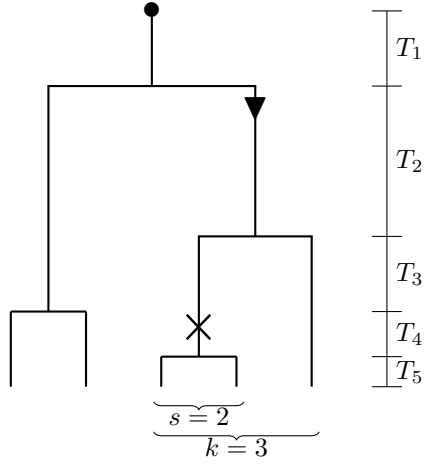
\begin{figure}
\begin{center}
\begin{tikzpicture}[scale = 1]
\draw[font = \Large] (2.875,5) node (theloss) {$\bullet$};
\draw[thick] (2.875,4) -- (2.875,5);
 \draw[thick] (2,1)--(2,0);
\draw[thick] (1,1)--(1,0);
 \draw[thick] (2,1)--(1,1);
\draw[thick] (3,0.4)--(3,0);
\draw[thick] (4,0.4)--(4,0);
 \draw[thick] (3,0.4)--(4,0.4);
\draw[thick] (3.5,0.4) -- (3.5,2) -- (5,2) -- (5,0);
\draw[thick] (1.5,1) -- (1.5,4) -- (4.25,4) -- (4.25,2);

\draw[decorate,decoration={brace,amplitude=3pt,mirror}] 
    (2.9,-0.6) -- (5.1,-0.6); 
\draw node at (4,-0.8){$k = 3$};

\draw[decorate,decoration={brace,amplitude=3pt,mirror}] 
    (2.9,-0.2) -- (4.1,-0.2); 
\draw node at (3.5,-0.4){$s = 2$};

\draw (6,0) -- (6,5);
\draw (5.8,0) -- (6.2,0);
\draw[right] node at (6,0.2){$T_5$};
\draw (5.8,0.4) -- (6.2,0.4);
\draw[right] node at (6,0.7){$T_4$};
\draw (5.8,1) -- (6.2,1);
\draw[right] node at (6,1.5){$T_3$};
\draw (5.8,2) -- (6.2,2);
\draw[right] node at (6,3){$T_2$};
\draw (5.8,4) -- (6.2,4);
\draw[right] node at (6,4.5){$T_1$};
\draw (5.8,5) -- (6.2,5);

\draw[font = \Large] (4.25,3.7) node (gain1) {$\blacktriangledown$};

\draw (3.5,0.8) node (mut11) {\XSolid};
 \end{tikzpicture}
 \caption{
  The second line during $T_2$ is of size $k=3$ and there is a mark $(\blacktriangledown)$ on the second line during $T_2$. Due to the dot $(\times)$ during $T_4$ at the first line of the marked subtree,
  the corresponding mutation is present in $s=2$ out of $k=3$ individuals which carry the gene. \label{subtree}}
\end{center}
\end{figure}

Combining the above yields
\begin{align*}
 &\mathbb E\left[ (du \times dv) \in \mathcal M_{i_j} \text{ for exactly } s \text{ different } i_j \Big | 
	    \begin{array}{l}
	    \text{$l$th line during $T_i$ is of size $k$}\\ \text{and mark in $du$ on $l$th line during $T_i$}\\
	    \end{array}
 \right]\\
=&\sum_{m=1}^k \sum_{r=1}^m \mathbb P [\text{$r$th line during $\mathcal T(i,k,m)$ is of size $s$}] \cdot \int \mathbb P[\text{ dot in $dv$ on $r$th line during $\mathcal T(i,k,m)$}]\\
=&\sum_{m=1}^k m \binom{k-m}{s-1} \frac{(s-1)!(m-1)\cdots(k-s-1)}{(m)\cdots(k-1)}      \sum_{j=i}^n \binom{j-i}{m-1}  \binom{n-j}{k-m} \binom{n-i}{k-1}^{-1}          \frac{\theta_2}{j(j-1+\rho)}dv\\
=&\sum_{m=1}^{k-s+1} m \binom{k-m}{s-1} \frac{(s-1)!(m-1)\cdots(k-s-1)}{(m)\cdots(k-1)}      \sum_{j=i}^n \binom{j-i}{m-1}  \binom{n-j}{k-m} \binom{n-i}{k-1}^{-1}          \frac{\theta_2}{j(j-1+\rho)}dv\\
\end{align*}

Setting all parts together gives
\begin{align*}
 \mathbb E[G_{k,s}] =
 & \sum_{i=1}^n \sum_{l=1}^i  \binom{n-i}{k-1}
     \frac{(k-1)!(i-1+\rho)\cdots(n-k-1+\rho)}{(i+\rho)\cdots(n-1+\rho)}
     \frac{\theta_1}{i(i-1+\rho)}\\
 & \quad \frac{\theta_2}{s}  \sum_{m=1}^{k-s+1} m \frac{s!(m-1)\cdots(k-s-1)}{(m)\cdots(k-1)} \binom{k-m}{s-1}       
     \sum_{j=i}^n \binom{j-i}{m-1} \binom{n-j}{k-m} \binom{n-i}{k-1}^{-1}          \frac{1}{j(j-1+\rho)}\\
 = & \frac{\theta_1}{k} \frac{k!}{(n-k+\rho)\cdots(n-1+\rho)} \frac{\theta_2}{s}  \\
    &
	\sum_{m=1}^{k-s+1} m \frac{s!(m-1)\cdots(k-s-1)}{(m)\cdots(k-1)} \binom{k-m}{s-1}  \sum_{i=1}^n \sum_{j=i}^n \binom{j-i}{m-1} \binom{n-j}{k-m} \frac{1}{j(j-1+\rho)}
\end{align*}

\allowdisplaybreaks
We can simplify the sum if $s < k$, as then $  m \frac{s!(m-1)\cdots(k-s-1)}{(m)\cdots(k-1)} \binom{k}{s-1} = \frac{m(m-1)sk}{(k-s)(k-s+1)} $ to
\begin{align*}
&\sum_{m=1}^{k-s+1} m \frac{s!(m-1)\cdots(k-s-1)}{(m)\cdots(k-1)} \binom{k-m}{s-1}  \sum_{j=1}^n \binom{n-j}{k-m} \frac{1}{j(j-1+\rho)} \sum_{i=1}^j \binom{j-i}{m-1} \\
&= \sum_{m=1}^{k-s+1} m \frac{s!(m-1)\cdots(k-s-1)}{(m)\cdots(k-1)} \binom{k-m}{s-1}  \sum_{j=1}^n \binom{n-j}{k-m} \frac{1}{j(j-1+\rho)} \binom{j}{m} \\
&= \sum_{j=1}^n \frac{1}{j(j-1+\rho)} \sum_{m=1}^{k-s+1} m \frac{s!(m-1)\cdots(k-s-1)}{(m)\cdots(k-1)} \binom{k-m}{s-1}   \binom{n-j}{k-m}  \binom{j}{m} \\
&= \sum_{j=1}^n \frac{1}{j(j-1+\rho)} \sum_{m=1}^{k-s+1} m \frac{s!(m-1)\cdots(k-s-1)}{(m)\cdots(k-1)} \binom{k}{s-1} \binom{n}{j}^{-1} \binom{k-s+1}{m}   \binom{n-k}{j-m}  \binom{n}{k}  \\
&= \binom{n}{k}  \frac{sk}{(k-s)(k-s+1)} \sum_{j=1}^{n-s+1} \binom{n}{j}^{-1} \frac{1}{j(j-1+\rho)} \sum_{m=1}^{k-s+1}   m (m-1)   \binom{k-s+1}{m}   \binom{n-k}{j-m}   \\
&= \binom{n}{k}  sk \sum_{j=1}^{n-s+1} \binom{n}{j}^{-1} \frac{1}{j(j-1+\rho)} \sum_{m=2}^{k-s+1}    \binom{k-s-1}{m-2}   \binom{n-k}{j-m}   \\
&= \binom{n}{k}  sk \sum_{j=1}^{n-s+1} \binom{n}{j}^{-1} \frac{1}{j(j-1+\rho)} \binom{n-s-1}{j-2}   \\
&= \binom{n}{k}  \frac{sk}{(n-s)(n-s+1)} \sum_{j=0}^{n-s-1} \binom{n}{j+2}^{-1} \frac{j+1}{j+1+\rho} \binom{n-s+1}{j+2}   \\
&= \binom{n}{k}  \frac{k}{n} \binom{n-1}{s}^{-1} \sum_{j=0}^{n-s-1} \frac{j+1}{j+1+\rho} \binom{n-j-2}{s-1}   \\
\end{align*}

If $k=s$ we have $  m \frac{s!(m-1)\cdots(k-s-1)}{(m)\cdots(k-1)} \binom{k}{s-1} = sk$ such that the sum gets
\begin{align*}
&\phantom{=} \binom{n}{k}  sk \sum_{j=1}^{n-s+1} \binom{n}{j}^{-1} \frac{1}{j(j-1+\rho)} \sum_{m=1}^{k-s+1}  \binom{k-s+1}{m}   \binom{n-k}{j-m}   \\
& = \binom{n}{k}  sk \sum_{j=1}^{n-s+1} \binom{n}{j}^{-1} \frac{1}{j(j-1+\rho)} \binom{n-k}{j-1} .   \\
\end{align*}

So we end up with
\begin{align*}
 \mathbb E[G_{k,s}] = & \frac{\theta_1}{k} \frac{k!}{(n-k+\rho)\cdots(n-1+\rho)} \frac{\theta_2}{s} \binom{n}{k}  \frac{k}{n} \binom{n-1}{s}^{-1} \sum_{j=0}^{n-s-1} \frac{j+1}{j+1+\rho} \binom{n-j-2}{s-1}\\
                    = &  \frac{\theta_1}{k} \frac{(n-k+1)\cdots n}{(n-k+\rho)\cdots(n-1+\rho)} \frac{\theta_2}{s} \frac{k}{n} \binom{n-1}{s}^{-1} \sum_{j=0}^{n-s-1} \frac{j+1}{j+1+\rho} \binom{n-j-2}{s-1},
\end{align*}
if $s < k$ and 
\begin{align*}
 \mathbb E[G_{k,s}] 
 = &  \frac{\theta_1}{k} \frac{(n-k+1)\cdots n}{(n-k+\rho)\cdots(n-1+\rho)} \frac{\theta_2}{s} sk  \sum_{j=1}^{n-s+1}  \binom{n}{j}^{-1} \frac{1}{j(j-1+\rho)}   \binom{n-k}{j-1}   \\
\end{align*}
if $s = k$ and we are done.

\end{proof}

\subsection{Proof of Corollary \ref{condsitefreqspec}}

We will proof Corollary \ref{condsitefreqspec} using Theorem \ref{jointfs}.
\begin{proof}[Proof of Corollary \ref{condsitefreqspec} based on Theorem \ref{jointfs}]
Note that we can write $\mathbb E [G_k] = \int_I \mathbb P(F(u) = k) du$, where the integrand does not depend on $u$ as  
\begin{equation*}
\mathbb P( F(u) = k) du = \sum_{i=1}^n \sum_{l=1}^i \mathbb P[\text{$l$th line during $T_i$ is of size $k$}] \mathbb P[\text{mark in $du$ on $l$th line during $T_i$}],
\end{equation*}
and marks in $du$ occur at rate $\tfrac{\theta_1 du}{2}$ independently of the other marks.
If we define 
\begin{equation*}
S^u_{s,dv} := |\{v \in dv : v \in \mathcal S^u_i \text{ for exactly $s$ different $i$ with $\mathcal M_i(u,0) = 1$}   \}| ,
\end{equation*}
we can be sure that $S^u_{s,dv} \leq 1$, as $dv$ is small and write
\begin{equation*}
\mathbb E[G_{k,s}] = \int_I \int_J  \mathbb P ( S^u_{s,dv} = 1  \cap F(u) = k) du.
\end{equation*}
So we end up with
\begin{align*}
\mathbb E[S^u_s\ |F(u)=k] &= \int_J \mathbb P ( S^u_{s,dv} = 1  \ | F(u) = k)
= \int_J \frac{ \mathbb P ( S^u_{s,dv} = 1  \cap F(u) = k)}{\mathbb P(F(u) = k) } \\
&= \frac{ \int_I  \int_J \mathbb P ( S^u_{s,dv} = 1  \cap F(u) = k) du}{ \int_I \mathbb P(F(u) = k) du}
= \mathbb E[G_{k,s}] \mathbb E[G_k]^{-1}.
\end{align*}
If $s < k$ we have
\begin{align*}
\mathbb E[&S^u_s\ |F(u)=k] = \mathbb E[G_{k,s}] \mathbb E[G_k]^{-1} = \frac{\theta_2}{s} \frac{k}{n} \binom{n-1}{s}^{-1} \sum_{j=0}^{n-s-1} \frac{j+1}{j+1+\rho} \binom{n-j-2}{s-1}.
\end{align*}

In particular, if $k=n$ we get
\begin{equation*}
\mathbb E[S^u_s\ |f(u) = n] =  \frac{\theta_2}{s} \binom{n-1}{s}^{-1} \sum_{j=0}^{n-s-1} \frac{j+1}{j+1+\rho} \binom{n-j-2}{s-1}.
\end{equation*}

And for $s = k < n$ we have
\begin{align*}
\mathbb E[S^u_s\ |f(u)=k] &= \mathbb E[G_{k,s}] \mathbb E[G_k]^{-1}\\
&=  \frac{\theta_2}{s} sk \sum_{j=1}^{n-k+1}  \binom{n}{j}^{-1} \frac{1}{j(j-1+\rho)}  \binom{n-k}{j-1} \\
\end{align*}
and $ \mathbb E[S^u_n\ |f(u)=n] = \frac{\theta_2}{\rho}$ for $s = k = n$.
\end{proof}

\begin{remark}
For $\rho = 0$ we may proof Corollary \ref{condsitefreqspec} based on results from \cite{Wiuf1999} and \cite{Griffiths2003}.
Therefore apply Lemma 4 to Lemma 2 in \cite{Wiuf1999}, such that we get for an $k$-subtree beneath a gene gain event:
\begin{equation}
\mathbb P[\text{subtree has $m$ lineages, when the $n$-tree has $j$ lineages}] = \binom{n-j}{k-m}\binom{j}{m}\binom{n}{k}^{-1}.
\end{equation}
Let $T_m^\#$ be the time, where the subtree has $m$ lineages.
Then
\begin{equation}
\mathbb E[T_m^\#] = \sum_{j=m}^{n}  \binom{n-j}{k-m}\binom{j}{m}\binom{n}{k}^{-1} \mathbb E[T_j] = \frac{k}{n} \frac{2}{m(m-1)}.
\end{equation}
We can now compute, just as in equation (4.5) in \cite{Griffiths2003}, 
\begin{equation}
\mathbb E[S_s^u| F(u) = k] = \frac{\theta_2}{2} \sum_{m=2}^{k-s+1} m p_{k,m}(s) \mathbb E[T_m^\#] = \frac{k}{n}\frac{\theta_2}{s}.
\end{equation}
Here $p_{k,m}(s)$ is the probability for a particular lineage 
among $m$ lineages to be of size $s$ in a tree of size $k$, which we derived in
equation \eqref{newhoppe}.
\end{remark}

\subsection{Proof of Theorem \ref{dispestimates}}

The upper bound for the estimators in Theorem \ref{dispestimates} is a simple consequences of the following Corollary.

\begin{corollary}
 Consider the site frequency spectrum $(S^u_s)_{s=1,\dots,k-1}$ for a dispensable gene $u$ as defined in Definition \ref{def:sfsdispensablegenes} and 
the site frequency spectrum $(C^u_s)_{s=1,\dots,k-1}$ for an essential core gene. Then
\begin{equation}
\mathbb E[S^u_s\ | F(u) = k] \leq \frac{k}{n} \mathbb E [C^u_s] = \frac{k}{n} \frac{\theta_2}{s}.
\end{equation}
\end{corollary}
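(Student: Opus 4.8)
The plan is to derive the bound directly from the closed form for $\mathbb E[S^u_s\mid F(u)=k]$ in equation~\eqref{EkGs} of Corollary~\ref{condsitefreqspec}, together with the classical value $\mathbb E[C^u_s]=\theta_2/s$ from Theorem~\ref{sitefreqspec}. Since $1\le s\le k-1<n$, equation~\eqref{EkGs} applies, so the asserted inequality
\[
\mathbb E[S^u_s\mid F(u)=k]=\frac{\theta_2}{s}\,\frac{k}{n}\,\binom{n-1}{s}^{-1}\sum_{j=0}^{n-s-1}\frac{j+1}{j+1+\rho}\binom{n-j-2}{s-1}\ \le\ \frac{k}{n}\,\frac{\theta_2}{s}
\]
is, after cancelling the positive factor $\tfrac{\theta_2}{s}\tfrac{k}{n}$, equivalent to the elementary estimate $\binom{n-1}{s}^{-1}\sum_{j=0}^{n-s-1}\tfrac{j+1}{j+1+\rho}\binom{n-j-2}{s-1}\le 1$.

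First I would bound each summand by dropping the loss contribution: for $\rho\ge0$ and every $j\ge0$ one has $\tfrac{j+1}{j+1+\rho}\le1$, so it suffices to show $\sum_{j=0}^{n-s-1}\binom{n-j-2}{s-1}=\binom{n-1}{s}$. Substituting $i=n-2-j$ turns the left side into $\sum_{i=s-1}^{n-2}\binom{i}{s-1}$, which equals $\binom{n-1}{s}$ by the hockey-stick identity $\sum_{i=r}^{m}\binom{i}{r}=\binom{m+1}{r+1}$ with $r=s-1$, $m=n-2$. This yields the claim. It also shows the inequality is an equality exactly when $\rho=0$ (recovering $\mathbb E[S^u_s\mid F(u)=k]=\tfrac kn\tfrac{\theta_2}{s}$, as noted after Theorem~\ref{dispestimates}) and is strict whenever $\rho>0$, since then the $j=0$ term already contributes $\tfrac1{1+\rho}\binom{n-2}{s-1}<\binom{n-2}{s-1}$, with $\binom{n-2}{s-1}>0$ because $s\le n-1$.

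I do not expect a genuine obstacle: the argument is a one-line reduction followed by a standard binomial identity, and the only care required is matching the summation range so the reindexed sum has exactly hockey-stick shape, plus a glance at the degenerate cases $s=1$ and $s=k-1$, which the same identity covers. As an aside, the inequality also has a transparent coalescent reading via Theorem~\ref{th:familysizehoppekingman} — the $k$ gene-carrying individuals span a subtree whose time at $m$ ancestral lineages has expectation $\tfrac kn\tfrac{2}{m(m-1)}$ when $\rho=0$ and is shortened further for $\rho>0$, and $\mathbb E[S^u_s\mid F(u)=k]$ is a nonnegative linear functional of these times — but the algebraic route is shorter and additionally pins down the equality case. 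This Corollary is precisely what Theorem~\ref{dispestimates} needs: since $\widehat\theta_{W,k}$ and $\widehat\pi_k$ are nonnegative weighted sums $\sum_s w_s S^u_s$ of the frequency classes whose weights are normalised so that $\sum_s w_s\mathbb E[C^u_s]=\theta_2$ for an essential core gene, replacing $\mathbb E[C^u_s]$ by $\mathbb E[S^u_s\mid F(u)=k]$ and applying the Corollary term by term gives $\mathbb E[\widehat\theta_{W,k}]\le\tfrac kn\theta_2$ and $\mathbb E[\widehat\pi_k]\le\tfrac kn\theta_2$.
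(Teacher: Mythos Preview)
Your proof is correct and is essentially identical to the paper's: both start from the explicit formula \eqref{EkGs}, bound $\tfrac{j+1}{j+1+\rho}\le 1$, and then use the identity $\sum_{j=0}^{n-s-1}\binom{n-j-2}{s-1}=\binom{n-1}{s}$ to conclude. Your additional remarks on the equality case $\rho=0$ and the coalescent interpretation are accurate but go slightly beyond what the paper writes out.
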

\begin{proof}
 We have to show
\begin{equation}
\frac{\theta_2}{s} \frac{k}{n} \binom{n-1}{s}^{-1} \sum_{j=0}^{n-s-1} \frac{j+1}{j+1+\rho} \binom{n-j-2}{s-1}\\   < \frac{k}{n} \frac{\theta_2 }{s}  \label{sfsabschaetz}
\end{equation}
so it suffices to confirm
\begin{align*}
 & \frac{\theta_2}{s} \frac{k}{n} \binom{n-1}{s}^{-1} \sum_{j=0}^{n-s-1} \frac{j+1}{j+1+\rho} \binom{n-j-2}{s-1}\\   
 &\leq \frac{\theta_2}{s} \frac{k}{n} \binom{n-1}{s}^{-1} \sum_{j=0}^{n-s-1} \binom{n-j-2}{s-1}
 = \frac{\theta_2}{s} \frac{k}{n}. \qedhere
 \end{align*}
\end{proof}

\begin{proof}[Proof of Theorem \ref{dispestimates}]

From \eqref{sfsabschaetz} we see immediately
\begin{align*}
 \mathbb E[S_{\text{seg}}] &= \sum_{s=1}^{k-1} \mathbb E[ S_s^u | F(u) = k  ] \leq \frac{k}{n} \sum_{s=1}^{k-1} \mathbb E[ C_s^u ] = \frac{k}{n} \theta_2 \sum_{s=1}^{k-1} \frac{1}{s}\\
 \mathbb E[\widehat \theta_{W,k}] &\leq \frac{k}{n} \theta_2\\
 \mathbb E[\widehat \pi_k] &=  \sum_{s=1}^{k-1} \mathbb E[S_s^u| F(u) = k] \frac{s(k-s)}{\binom{k}{2}} \leq \sum_{s=1}^{k-1} \mathbb E[C_s^u] \frac{s(k-s)}{\binom{k}{2}} = \frac{k}{n} \theta_2\\ 
\end{align*}

For the estimators $\widehat \theta_{W,k}$ and  $\widehat \pi_k$ we set in the result from Corollary \ref{condsitefreqspec}
\begin{align*}
 \mathbb E[\widehat \theta_{W,k}] &= \frac{\sum_{s=1}^{k-1} \mathbb E[S_s^u| F(u) = k]  }{\sum_{s=1}^{k-1} \frac{1}{s}}\\
 &= \theta_2 \frac{\frac{k}{n} \sum_{s=1}^{k-1} \frac{1}{s} \binom{n-1}{s}^{-1} \sum_{j=0}^{n-s-1} \frac{j+1}{j+1+\rho} \binom{n-j-2}{s-1}}{\sum_{s=1}^{k-1} \frac{1}{s}}
\intertext{and}
 \mathbb E[\widehat \pi_k] &=  \sum_{s=1}^{k-1} \mathbb E[S_s^u| F(u) = k] \frac{s(k-s)}{\binom{k}{2}} \\
 &= \frac{\theta_2}{\binom{k}{2}} \frac{k}{n} \sum_{s=1}^{k-1} (k-s) \binom{n-1}{s}^{-1} \sum_{j=0}^{n-s-1} \frac{j+1}{j+1+\rho} \binom{n-j-2}{s-1} \\
\end{align*}
and we are done.


\end{proof}

\begin{section}{Acknowledgments}
 I thank Peter Pfaffelhuber, Joachim Hermisson and Iulia Dahmer for fruitful discussion and helpful comments.
 The German Research Foundation is acknowledged for funding via the project PP672/2-1.
\end{section}

\pagebreak

\bibliographystyle{chicago}
\bibliography{newbib.bib}

\end{document}